\renewcommand{\baselinestretch}{1.3}
\newcommand{\CC}{\mathbb{C}} 
\newcommand{\RR}{\mathbb{R}} 
\newcommand{\FF}{\mathbb{F}} 
\newcommand{\HH}{\mathbb{H}}
\newcommand{\ZZ}{\mathbb{Z}}
\newcommand{\bil}[2]{(#1,#2)}
\newcommand{\N}{\mathcal{N}}
\newcommand{\Co}{\textsl{Co}}
\newtheorem{theorem}{Theorem}
\newtheorem{proposition}[theorem]{Proposition}
\newtheorem{lemma}[theorem]{Lemma}
\newcommand{\SL}{\operatorname{\textsl{SL}}}      
\newcommand*{\mydprime}{^{\prime\prime}\mkern-1.2mu}
\DeclareMathOperator{\Aut}{Aut}
\DeclareMathOperator{\Tr}{Tr}
\newcommand{\ex}{\operatorname{e}}
\newcommand{\LL}{\Lambda}
\newcommand{\vfe}{V^f_{E_8}}
\newcommand{\vsn}{V^{s\natural}}
\newcommand{\vfn}{V^{f\natural}}
\newcommand{\cT}{{\cal T}}
\newcommand{\tw}{{\rm tw}}
\newcommand{\gz}{\mathfrak{z}}
\newcommand{\SO}{\operatorname{SO}}
\newcommand{\Spin}{\operatorname{Spin}}
\newcommand{\hookuparrow}{\mathrel{\rotatebox[origin=c]{90}{$\hookrightarrow$}}}
\numberwithin{equation}{section}
\def\be{\begin{equation}}
\def\ee{\end{equation}}
\title{Vertex operator superalgebra/sigma model correspondences:\\ The four-torus case}
\author[1]{Vassilis Anagiannis\thanks{v.anagiannis@uva.nl}}
\author[1,2]{Miranda C. N. Cheng\thanks{mcheng@uva.nl}}
\author[3]{John Duncan\thanks{john.duncan@emory.edu}}
\author[4]{\\ Roberto Volpato\thanks{volpato@pd.infn.it}}
\date{}
\affil[1]{\small{Institute of Physics, University of Amsterdam, Amsterdam, the Netherlands}}
\affil[2]{\small{Korteweg-de Vries Institute for Mathematics, University of Amsterdam, Amsterdam, the Netherlands}}
\affil[3]{\small{Department of Mathematics, Emory University, Atlanta, GA 30322, U.S.A.}}
\affil[4]{\small{Dipartimento di Fisica e Astronomia `Galileo Galilei' e INFN sez. di Padova\authorcr Via Marzolo 8, 35131 Padova, Italy}}
\begin{document}

\maketitle

\abstract{
We propose a correspondence between vertex operator superalgebras and families of sigma models in which the two structures are related by symmetry properties and a certain reflection procedure. 
The existence of such a correspondence is motivated by previous work on ${\cal N}=(4,4)$ supersymmetric non-linear sigma models on K3 surfaces and on a vertex operator superalgebra with Conway group symmetry.
Here we present an example of the correspondence for ${\cal N}=(4,4)$ supersymmetric non-linear sigma models on four-tori, and compare it to the K3 case.

}

\clearpage

\tableofcontents

\newpage

{\centering{\em{{\bf\em In the memory of Prof. Tohru Eguchi}\\
As students of string theory and as curious mathematicians, we needed to study various papers of Professor Eguchi and his collaborators. A significant example is the review ``Gravitation, Gauge Theory and Differential Geometry" of almost 200 pages. As researchers, we have been seduced by moonshine phenomena for mock modular objects, the temptation for which must be blamed upon the paper "Notes on the K3 Surfaces and Mathieu Group $M_{24}$". We have been missing, and will continue to miss Eguchi-san and his inspiring work, as well as the unassuming, creative and curious manner in which he discussed and talked with us in person.
}}}

\section{Introduction}
\label{sec:intro}

The relation between sporadic finite simple groups and 
symmetries of K3 surfaces and K3 sigma models has attracted a lot of attention since the
pioneering work of \cite{mukai} and \cite{eot}. For some instances of this see \cite{eguchi1, eguchi2, eguchi3, dyons, gaberdiel_math1, gaberdiel_math2, eguchi4, surf1, umbral1, umbral2, surf2, surf3, LG, HM2020}. 
Apart from the Mathieu groups featured in \cite{mukai, eot}, 
symmetries of 
${\cal N}=(4,4)$ supersymmetric non-linear sigma models 
on K3 surfaces
have also been related to other groups, including the sporadic simple Conway groups \cite{gaberdiel_k3, huybrecht, derived}, and the groups of umbral moonshine \cite{umbral_k3, k3_lattices}.

The so-called twined elliptic genera play a critical role in quantifying this relation since they 
are sensitive to the way that symmetries act on quantum states. Of special interest is the fact that many of the twined elliptic genera of sigma models on K3 surfaces can be reproduced by the
vertex operator superalgebra (VOSA) $V^{s\natural}$, which has played a prominent role in Conway moonshine \cite{Duncan07,conway, derived}. 
(Here and in the remainder of this work we use {\em sigma model} as a shorthand for ${\cal N}=(4,4)$ supersymmetric non-linear sigma model.)

The analysis of \cite{k3_lattices} indicates that not all the twined K3 elliptic genera can be reproduced by Conway group symmetries of $V^{s\natural}$. It is nonetheless interesting that the single VOSA $\vsn$ can capture the symmetry properties of a large family of sigma models in the K3 moduli space, especially given that $V^{s\natural}$ is, in physical terms, a chiral theory, with central charge $c=12$, while the K3 sigma models 
are non-chiral theories, with $c=\bar{c}=6$. Moreover, in \S\ref{app:k3teg} we explain how all but one of the twined K3 elliptic genera  may be recovered from $\vsn$ if we allow non-Conway group symmetries (which is to say symmetries that do not preserve supersymmetry), or Conway group symmetries that are not of the expected order. 

This novel chiral/non-chiral connection between $\vsn$ and K3 sigma models has been made precise at a special (orbifold) point in the moduli space, where $V^{s\natural}$ can be retrieved as the image of the corresponding K3 theory under {\em reflection}: 
a procedure explored in \cite{derived} for the specific case of $V^{s\natural}$ and later formerly investigated in more generality by Taormina--Wendland in \cite{k3refl}. (See also \cite{refl} for a complementary approach). \par

To put this connection in a more structured context let us consider sigma models with target space
$X$ within one connected component of the full moduli space ${\cal M}={\cal M}(X)$ of sigma models on $X$, and denote
the corresponding sigma models by $\Sigma(X;\mu)$, for $\mu$ a
point in $\cal M$. For instance, for $X=T^4$ or $X=K3$ the moduli space consists of a single component,  
and takes the form 
\begin{equation}\label{def:modspace}
\begin{split}
&{\cal M}(T^4)= 
\left(SO(4)\times SO(4)\right)\backslash SO^+(4,4)/SO^+(\Gamma^{4,4})~,\\
&{\cal M}(K3)= \left(SO(4)\times O(20)\right)\backslash O^+(4,20)/ O^+(\Gamma^{4,20})~.
\end{split}
\end{equation}
Here $\Gamma^{a,b}$ denotes an even unimodular lattice of signature $(a,b)$. %

The chiral/non-chiral connection between $\vsn$ and 
K3 sigma models discussed above
now motivates the following question: 
\begin{quote}
{\em 
Are there 
pairs of 
VOSA/sigma model family pairs $(V,{\cal M}(X))$
such that 
the
following properties hold? 
\begin{enumerate}
\item The symmetry group of  
$V=V(X)$ contains the symmetry groups 
of all of the $\Sigma(X;\mu)$ for $\mu\subset{\cal M}(X)$. \label{property-sym}
\item The twined elliptic genera of ${V}$ 
\label{property-teg}
capture 
the twined elliptic genera arising from the $\Sigma(X;\mu)$ for all $\mu\in {\cal M}$.
\item There exists a particular point $\mu^*\in{\cal M}$ such that the reflection procedure 
maps $\Sigma(X;\mu^*)$ to ${V}$.\label{property-ref}
\end{enumerate}
}
\end{quote}
We will refer to pairs $(V,{\cal M})$ satisfying these 3 properties as 
{\em VOSA/sigma model correspondences}.

As we have explained, 
$(\vsn,{\cal M}(K3))$ comes tantalisingly close to being 
 an example of such a VOSA/sigma model correspondence. However, there are (conjecturally) a handful of twined elliptic genera of $\Sigma(X;\mu)$, with $\mu$ lying in certain high codimensional subspaces of ${\cal M}(X)$, that do not arise from $\vsn$. See Conjectures 5 and 6, and Table 4 of \cite{k3_lattices}. As a result, Property \ref{property-teg} above fails to hold for the $(\vsn,{\cal M}(K3))$ pair. Our main objective in this work is to illustrate a complete example of the correspondence, where K3 surfaces are replaced by (complex) four-dimensional tori.
The counterpart to $\vsn$ in this case is the 
VOSA naturally associated to the $E_8$ lattice, which we here denote $\vfe$ (as in \cite{Duncan07,FLMBerk}).
With the $K3$ case in mind this is perhaps unsurprising, given that $V^{s\natural}$ can be written as a suitable $\ZZ_2$ orbifold of $\vfe$ (see \cite{Duncan07,FLMBerk}), while on the orbifold locus of ${\cal M}(K3)$, the  
corresponding sigma models can also be obtained as $\ZZ_2$ orbifolds of four-torus sigma models (see Figure \ref{diagram}).
\begin{figure}
\begin{center}
\includegraphics[scale=0.35]{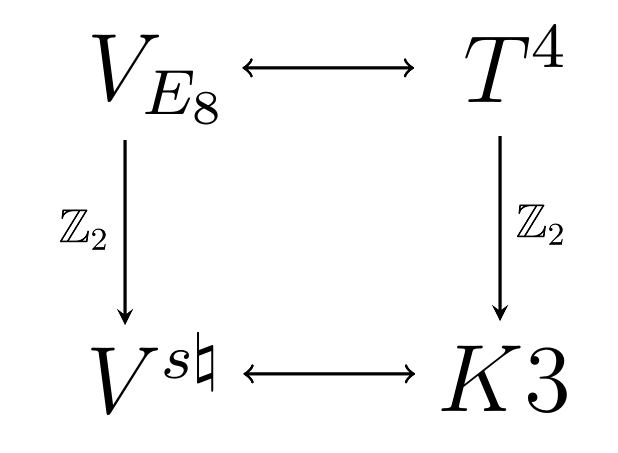}
\end{center}
\vspace*{-5mm}
\caption{VOSA/sigma model connections and the orbifold procedure.}
\label{fig:Z2diagram}
\end{figure}
In fact, as we will see, the VOSA/sigma model correspondence works better in the four-torus case since it holds for all points in ${\cal M}(T^4)$: The twined elliptic genera of \emph{any} $\Sigma(T^4;\mu)$ can be reproduced by the supersymmetry preserving twined elliptic genera of $\vfe$. (See Theorem \ref{thm:twining}.) 
So all three properties of our proposed VOSA/sigma model correspondence, including the one which failed for the $(\vsn, {\cal M}(K3))$ example, indeed hold in this case. It would be very interesting to understand whether a \emph{complete} realization of the VOSA/sigma model correspondence might exist even for K3 surfaces. Our results can be regarded as encouraging evidence in this direction.

The rest of the paper is organized as follows. In \S\ref{sec:t4model} we discuss the supersymmetry-preserving symmetries of $\Sigma(T^4;\mu)$ across the moduli space, as well as the corresponding twined elliptic genera. 
In \S\ref{sec:symmetry groups} we summarise important results on the groups arising in \S\ref{sec:t4model}. 
In \S\ref{sec:e8model} we discuss the VOSA $\vfe$, naturally associated to the $E_8$ lattice, and show that its supersymmetry-preserving symmetry group contains all the symmetry groups discussed in \S\ref{sec:t4model}. Hence we obtain that Property \ref{property-sym} of VOSA/sigma model correspondences holds for $(\vfe,{\cal M}(T^4))$. 
We then prove in Theorem \ref{thm:twining} that the VOSA $\vfe$ recovers
all the twined elliptic general of the $\Sigma(T^4;\mu)$, thereby proving 
Property \ref{property-teg}. 

In \S\ref{sec:orbifolds} we elaborate on the relation between the VOSA/sigma model correspondences for 
$T^4$ and the near example for $K3$ via orbifolding. In particular, we prove in Proposition \ref{pro:orbifold} that the diagram in Figure \ref{fig:Z2diagram} commutes, 
for all orbifolding procedures of the theory.
Then in \S\ref{sec:reflection} we demonstrate that $\vfe$ can be obtained as the image of  $\Sigma(T^4;\mu^\ast)$ at a particular special point $\mu^\ast\in {\cal M}(T^4)$ under reflection, thus establishing the final VOSA/sigma model correspondence property (Property \ref{property-ref}) for $(\vfe,{\cal M}(T^4))$. This is the content of Theorem \ref{thm:reflection}. 

We conclude the paper with three appendices. In the first of these, \S\ref{app:sym}, we provide further information on the supersymmetry-preserving symmetries of four-torus sigma models. 
In 
\S\ref{app:cocycles} we recall, for the convenience of the reader, how automorphisms of a lattice lift to automorphisms of a corresponding lattice VOSA, and detail the workings of this in the specific case of $\vfe$.
Finally, in 
\S\ref{app:k3teg} we explain how more general twinings of $\vsn$ may be used to recover the twined K3 elliptic genera that were not computed in \cite{derived}. We also review the relationship between $\vfn$ \cite{Duncan07} and $\vsn$ \cite{conway,derived}, 
explain a sense in which the Conway group arises naturally as a group of automorphisms of $\vsn$, 
and
explain why they are the same as far as twinings of the K3 elliptic genus are concerned.

\section{The Sigma Models} 
\label{sec:t4model}

In this section we setup our notations and collect important background on  four-torus sigma models  and their symmetries. The exposition follows closely that in \cite{t4paper}.

\subsection{Symmetries}
\label{subsec:sym_T4}

A sigma model on $T^4$ is a supersymmetric conformal field theory defined in terms of four pairs of left- and right-moving bosonic $u(1)$ currents $j^a(z),\tilde{j}^a(\bar{z})$, with $a=1,\dots,4$, four pairs of left- and right-moving free real fermions $\psi^a(z),\tilde{\psi}^a(\bar{z})$, as well as exponential (primary) fields $V_k(z,\bar{z})$ labelled by vectors $k=(k_L,k_R)\in\Gamma^{4,4}_{\rm w-m}$. 

Let us now explain our notation. Let $\Gamma^{4,4}$ denote an 
even  unimodular lattice of signature $(4,4)$. The real vector space
\begin{equation}
\Pi=\Gamma^{4,4} \otimes\RR\cong\RR^{4,4}
\end{equation} 
admits orthogonal decompositions into positive- and negative-definite subspaces
\be\label{orth_decomp_w-m}
\Pi=\Pi_L\oplus_{\perp}\Pi_R.
\ee
Correspondingly, we decompose $k\in \Pi$ as $k=(k_L,0)+(0,k_R)$, where the two summands lie in the positive- and negative-definite subspaces respectively. 
The relative position of $\Pi_L$ and $\Pi_R$ uniquely determines each four-torus sigma model, and the corresponding Narain moduli space is as in (\ref{def:modspace}),
where $O(\Gamma^{4,4})$ acts as $T$-dualities and we restrict to the $T$-dualities that moreover preserve world-sheet parity (cf. \cite{k3_lattices}). We use $\Gamma^{4,4}_{\rm w-m}$ to denote the lattice $\Gamma^{4,4}$ equipped with a choice of an orthogonal decomposition into positive- and negative-definite subspaces. This structure is also known as the winding-momentum or Narain lattice in this context.

The chiral algebra of every four-torus sigma model contains an  $\mathfrak{u}(1)^4$ algebra generated by the currents $j^a$, as well as an $\mathfrak{so}(4)_1$ Kac-Moody algebra generated by $:\psi^a\psi^b:$, with $a, b =1,\dots, 4$. It also contains a small $\mathcal{N}=(4,4)$ superconformal algebra at central charge $c=\tilde{c}=6$, whose holomorphic part is generated by the holomorphic stress tensor $T(z)$, four supercurrents $G^\pm(z),G^{\prime\pm}(z)$ of weight $(3/2,0)$ that consist of linear combinations of terms of the form $:\psi^aj^b:$. In particular, the  fermionic $\mathfrak{so}(4)_1$ algebra contains an $\mathfrak{su}(2)_1$ `R-symmetry' Kac-Moody algebra, generated by currents $J^1,J^2,J^3$. Since the anti-chiral discussion is completely analogous, from now on we focus just on the chiral part. 

To describe the superconformal algebra in detail, 
it is convenient to define complex fermions
\begin{equation}
\label{complex_fermions}
\begin{split}
\chi^1:=\frac{1}{\sqrt{2}}(\psi^1+i\psi^3)~,~~\chi^{1^*}:=\frac{1}{\sqrt{2}}(\psi^1-i\psi^3)~, \\
\chi^2:=\frac{1}{\sqrt{2}}(\psi^2+i\psi^4)~,~~\chi^{2^*}:=\frac{1}{\sqrt{2}}(\psi^2-i\psi^4)~,
\end{split}
\end{equation}
obeying the standard OPEs
\begin{equation}
\chi^i(z)\chi^j(w)\sim\mathcal{O}(z-w)~,~~\chi^i(z)\chi^{j^*}(w)\sim\chi^{i^*}(z)\chi^j(w)\sim\frac{\delta_{ij}}{z-w}~.
\end{equation}
In terms of the complex fermions, the stress tensor is given by
\begin{equation}
T=-\sum_{a=1}^4 :j^aj^a:-\frac{1}{2}\sum_{i=1}^2(:\chi^i\partial{\chi^{i}}^*:+:{\chi^{i}}^*\partial\chi^i:)~,
\end{equation}
while the R-symmetry currents are given by\footnote{Note that this normalisation for the currents, while convenient and common in the physics literature, differs by a factor of $\frac{1}{2}$ from the normalisation that is common in the Kac--Moody algebra context. }
\begin{equation}
\label{R_symmetry_currents}
\begin{split}
&J^1={i}\left(:\chi^1\chi^2:+:\chi^{1^*}\chi^{2^*}:\right)~,~~J^2=~:\chi^1\chi^2:-:\chi^{1^*}\chi^{2^*}:~, \\
&J^3=~:{\chi^1}\chi^{1^*}:+:\chi^2\chi^{2^*}:~. 
\end{split}
\end{equation}

The symmetry groups occuring at different points in the moduli space of sigma models on $T^4$ that preserve the $\mathcal{N}=(4,4)$ superconformal algebra were fully classified in \cite{t4paper}. 
To describe these groups, let $U(1)_{L}^4$ and $U(1)_R^4$ be the Lie groups generated by the zero modes $j_0^a$ and $\tilde{j}_0^a$ respectively.
They describe the (independent) translations along the four-torus. 
Recall also that apart from the R-symmetry $\mathfrak{su}(2)_1$ algebra with generators \eqref{R_symmetry_currents}, there is another copy of $\mathfrak{su}(2)_1$ algebra in the fermionic $\mathfrak{so}(4)_1$ algebra, generated by the currents
 \begin{equation}\label{def:sigma_su2A}
\begin{split}
&A^1=i\left(:\chi^1\chi^{2^*}:+:\chi^{1^*}\chi^2:\right)~,~~A^2=~:\chi^1\chi^{2^*}:-:\chi^{1^*}\chi^2:~, \\
&A^3=~:{\chi^1}\chi^{1^*}:-:\chi^2\chi^{2^*}:~.
\end{split}
\end{equation}
Focussing on the zero modes, we have the relation
\be
SO(4)_L \cong (SU(2)_L^J\times SU(2)_L^A)/(-1)^{A_0^3+J_0^3},
\ee
where $(-1)^{A_0^3/J_0^3}$  is the non-trivial central element of $SU(2)_L^{A/J}$, 
and similarly for the right-moving side. 
Preserving the ${\cal N}=4$ superconformal algebra restricts us to the subgroup  $SU(2)_L^A$ which commutes with the R-symmetry  $SU(2)_L^J$. Moreover, identifying $SO(4)_L$ with $SO(\Pi_L)$, we need to consider subgroups that induce an automorphism of $\Gamma^{4,4}_{\rm w-m}$\footnote{The identification between $SO(4)_L$ with $SO(\Pi_L)$ is  given by the choice of the ${\cal N}=1$ supercurrent such that its generator is proportional to $\sum_{a=1}^4 :\psi^aj^a:$. Different choices of the $\N=1$ supercharge lead to different isomorphisms that are related to each other by R-symmetry transformations in $SU(2)^J_L$. }.

These considerations lead to the following specification of the symmetry groups of the four-torus sigma models.
They take the form
\be 
\label{G_symmetries}
G=(U(1)_L^4\times U(1)_R^4).G_0\ .
\ee
The group $G_0$ here is given by the intersection
\be \label{def:G0}
G_0=\left(SU(2)_L^A\times SU(2)_R^A\right)\cap O\left(\Gamma^{4,4}_{\textrm{w--m}}\right)\ ,
\ee
where the above identification is understood.

Notice that the groups $G_0$ defined in \eqref{def:G0} manifestly do not mix the spaces $\Pi_L$ and $\Pi_R$, and always 
contains a central $\ZZ_2$ subgroup generated by $(-1,-1)\in SU(2)_L^A\times SU(2)_R^A$.
Consider the set of all possible groups arising as
\be
\label{G1G0}
G_1:=G_0/(-1,-1).
\ee
This set turns out to be bijective to the set of 
 subgroups of the group of even-determinant Weyl transformations of $E_8$, denoted by $W^+(E_8)$, that fix an $E_8$-sublattice of rank at least $4$. 
 See \cite{t4paper} for a complete and descriptive list of all the possible groups $G_0$. We note here that 
 the groups $G_0$ and $G_1$ are interesting finite groups only at certain special points in the moduli space ${\cal M}(T^4)$ of sigma models on $T^4$. Generically, $G_0$ is isomorphic to $\ZZ_2$ and $G_1$ is trivial.

\subsection{Twined Genera}
\label{sec:twinedsigma}
The elliptic genus of an $\mathcal{N}=(4,4)$ superconformal theory is defined in terms of the superconformal algebra generators as the following trace over the RR sector,
\begin{equation}\label{def:EG}
\phi(\tau,z)=\Tr_{\textrm{RR}}\left[(-1)^{F}y^{J_0^3} ~q^{L_0-\frac{c}{24}}~\bar{q}^{\tilde{L}_0-\frac{\tilde{c}}{24}}\right]~,~~q:=e^{2\pi i\tau}~,~~y:=e^{2\pi iz}~,
\end{equation}
where $L_0$ is the zero mode of the stress energy tensor $T$, and the fermion number operator $(-1)^F$ will be discussed in more detail later. It receives non-vanishing contributions only from right-moving BPS states and thus does not depend on $\bar{\tau}$. For the $\mathcal{N}=(4,4)$ theories that we are considering, it is also a  weak Jacobi form of weight $0$ and index $1$, and does not depend on the moduli. For four-torus sigma models, we have  $c=\tilde{c}=6$ and the elliptic genus is in fact identically zero due to cancelling contributions from the BPS states, which form an even-dimensional representation of the Clifford algebra of the right-moving fermionic zero modes $\tilde{\chi}^i_0, \tilde{\chi}^{i^*}_0$. When the theory has additional symmetries $G$ preserving the superconformal algebra (i.e. at special points in the moduli space), we can also consider the  elliptic genus twined by an element $g\in G$ acting on the RR states,
\begin{equation}
\phi^{G}_g(\tau,z)=\Tr_{\textrm{RR}}\left[g~(-1)^{F}~y^{J_0^3}~q^{L_0-\frac{c}{24}}~\bar{q}^{\tilde{L}_0-\frac{\tilde{c}}{24}}\right]\ ,
\end{equation}
where the upper-script in the notation serves to remind us about moduli dependence (through the symmetry group $G$). The twined genus $\phi^G_g$ depends only on the conjugacy class of $g$ in $G$ and is a weak Jacobi form of weight $0$ and index $1$ for some congruence subgroup $\Gamma_g \subseteq \SL_2(\ZZ)$. 
Note that the normal subgroup $U(1)_L^4\times U(1)_R^4$ of $G$ \eqref{G_symmetries} acts trivially on all oscillators. For this reason we will first focus on the $G_0$ part when computing the twined elliptic genera.

To compute the elliptic genus twined by $g\in G_0\subset SU(2)_L^A \times SU(2)_R^A$, 
let us first describe the Fock space representation of the RR states in the present theory. This is built from all possible combinations of the free fermionic  $\chi_n^i$, $\chi_n^{i*}$, $\tilde{\chi}_n^i$, $\tilde{\chi}^{i*}_n$ and bosonic oscillators $j^a_n$, $\tilde{j}^a_n$, with $a=1,\dots,4$, $i=1,2$ and $n\in \ZZ_{\leq -1}$, acting on the Fock space ground states. The latter has a  convenient basis given by 
\begin{equation}\label{eqn:gndstates}
|k_L,k_R;s\rangle~,~~s=(s_1,s_2;\tilde{s}_1,\tilde{s}_2)~,~~s_1,s_2,\tilde{s}_1,\tilde{s}_2\in\left\lbrace\frac{1}{2},-\frac{1}{2}\right\rbrace~.
\end{equation}
Here $s$ is an index for the $2^4$-dimensional representation of the eight-dimensional Clifford algebra generated by 
  the fermionic zero modes $\chi_0^i$, $\chi_0^{i*}$, $\tilde{\chi}_0^i$, $\tilde{\chi}^{i*}_0$, which correspond to the fermionic RR ground states $|s\rangle:=|0,0;s\rangle$. The indices $k_L$ and  $k_R$ label points in the winding-momentum lattice, $k=(k_L,k_R)\in\Gamma^{4,4}_{\textrm{w--m}}$. In terms of the primary operators $V_k(z,\bar{z})$, the ground states in \eqref{eqn:gndstates} are given by $|k_L,k_R;s\rangle:=V_k(0,0)|s\rangle$.

In this basis, the eigenvalues of the fermionic ground states under the operators $J_0^3$ and $\tilde{J}_0^3$ are given by
\begin{equation}
J_0^3|s\rangle=(s_1+s_2)|s\rangle~,~~\tilde{J}_0^3|s\rangle=(\tilde{s}_1+\tilde{s}_2)|s\rangle~,
\end{equation}
and similarly 
\begin{equation}
A_0^3|s\rangle=(s_1-s_2)|s\rangle~,~~\tilde{A}_0^3|s\rangle=(\tilde{s}_1-\tilde{s}_2)|s\rangle~,
\end{equation}
while the $J^3$ charges of the fields are  given by
\be
\begin{array}{c|c|c}
\chi^i & \chi_n^{i*} & j^a_n \\ \hline +1 & -1 & 0
\end{array}
\ee
and similarly for the right-movers.
In these terms, the fermion number operator is defined as $(-1)^F:=(-1)^{J_0^3+\tilde{J}_0^3}$.

Let $\rho_\psi$ denote the $8$-dimensional representation of $G_0$ on the space spanned by $\psi^1,\ldots,\psi^4$ and $\tilde\psi^1,\ldots,\tilde\psi^4$. For a given element $g\in G_0$, choose the parametrisation of the complex fermions  such that $g$ acts as (cf. Table \ref{t:classes})
\be\label{g_action}
\rho_\psi(g)\chi^1=\zeta_L\chi^1,\qquad\rho_\psi(g)\tilde \chi^1=\zeta_R\tilde\chi^1. 
\ee
Since $g\in SU(2)_L^A \times SU(2)_R^A$, it follows that $g$ acts on the eight-dimensional representation $\rho_\psi$ as
\begin{equation}\label{def:gact_chi}
\begin{split}
& \rho_\psi(g)\chi^1=\zeta_L\chi^1~,~~\rho_\psi(g)\chi^{1^*}=\zeta_L^{-1}\chi^{1^*}~,~~\rho_\psi(g)\tilde{\chi}^1=\zeta_R\tilde{\chi}^1~,~~\rho_\psi(g)\tilde{\chi}^{1^*}=\zeta_R^{-1}\tilde{\chi}^{1^*} \\
& \rho_\psi(g)\chi^2=\zeta_L^{-1}\chi^2~,~~\rho_\psi(g)\chi^{2^*}=\zeta_L\chi^{2^*}~,~~\rho_\psi(g)\tilde{\chi}^2=\zeta_R^{-1}\tilde{\chi}^2~,~~\rho_\psi(g)\tilde{\chi}^{2^*}=\zeta_R\tilde{\chi}^{2^*}~,
\end{split}
\end{equation}
and similarly on the bosonic currents since the superconformal algebra is preserved. Note that the choice of parametrisation in \eqref{g_action}  is always possible, since by conjugations in  $SU(2)_L^A \times SU(2)_R^A$ we can let $g$ to be contained in the Cartan subgroup generated by $A_0^3$ and  ${\tilde A}_0^3$.

From the preceding discussion we conclude that the twined elliptic genus of the four-torus sigma model factors as 
\be
\label{eigenvalues_vector}
\phi^G_g(\tau,z)=\phi_g^{\textrm{osc}}(\tau,z)\phi_g^{\textrm{gs}}(z)\phi_g^{\textrm{w--m}}(\tau)\ ,
\ee 
where the three factors capture the contributions from 
 the oscillators, 
 the fermionic ground states, and 
winding-momentum (i.e. primaries $V_k$),  respectively. In what follows we will discuss them separately.

The action on the ground states is given by
\begin{equation}
g|s\rangle=\zeta_L^{A_0^3}\zeta_R^{\tilde{A}_0^3}|s\rangle
=\zeta_L^{s_1-s_2}\zeta_R^{\tilde{s}_1-\tilde{s}_2}|s\rangle~.
\end{equation}
Summing over the $2^4$ ground states $|s\rangle$ we hence arrive at
\begin{gather}
\begin{split}
\phi^{\textrm{gs}}_g(z)&=y^{-1}(1-\zeta_L y)(1-\zeta^{-1}_L y)(1-\zeta_R)(1-\zeta^{-1}_R)\\&=2(1-\Re(\zeta_R))(y^{-1}+y-2\Re(\zeta_L))\ .
\end{split}
\end{gather}
From \eqref{def:gact_chi}, we compute that the total contribution from the fermionic and bosonic oscillators is
\small
\begin{gather}
\begin{split}
\phi^{\textrm{osc}}_g(\tau,z)
&=\prod_{n=1}^\infty \frac{(1-\zeta_L yq^n)(1-\zeta^{-1}_L yq^n)(1-\zeta_L y^{-1}q^n)(1-\zeta^{-1}_L y^{-1}q^n)}{(1-\zeta_L q^n)^2(1-\zeta^{-1}_L q^n)^2}~.
\end{split}
\end{gather}
Notice that the contribution from the right-moving oscillators, and thus the $\bar{\tau}$ dependence, cancels out completely.

Finally, the contribution from winding-momentum is given by
\be\label{w-m_phases}
\phi_g^{\textrm{w--m}}(\tau)=\sum_{k=\left(k_L,k_R\right)\in \left(\Gamma^{4,4}_\textrm{w--m}\right)^{g}} \xi_g(k_L,k_R)~q^{\frac{{k}_L^2}{2}}\, \bar q^{\frac{{k}_R^2}{2}}\ .
\ee
Here $\left(\Gamma^{4,4}_\textrm{w--m}\right)^{g}$ is the $g$-fixed sublattice of $\Gamma^{4,4}_\textrm{w--m}$, and $\xi_g\left(k_L,k_R\right)$ are suitable phases that depend on the choice of the lift of $g$ from $G_0$ to $G$. As discussed in \S\ref{app:cocycles} one can always choose the standard lift, where the phases $\xi_g(k_L,k_R)$ are trivial for all $(k_L,k_R)\in \left(\Gamma^{4,4}_\textrm{w--m}\right)^{g}$.

Notice that if $g$ acts trivially on the right-movers, then $\zeta_R=1$ and $\phi^{\rm gs}_g$, and therefore $\phi^G_g$ vanishes. 
On the other hand, if both $\zeta_R$ and $\zeta_L$ are different from one, then $\left(\Gamma^{4,4}_\textrm{w--m}\right)^{g}=\{0\}$ and $\phi_g^{\textrm{w--m}}=1$. 
Thus, determining $\phi_g^{\text{w-m}}$ is nontrivial only when $\zeta_R\neq1$ and $\zeta_L=1$. 
As a result, we can rewrite 
\be\label{w-m_phases2}
\phi_g^{\textrm{w--m}}(\tau)=\sum_{k=\left(k_L,0\right)\in \left(\Gamma^{4,4}_\textrm{w--m}\right)^{g}} \xi_g(k_L,0)~q^{\frac{{k}_L^2}{2}}
\ee
which is indeed holomorphic in $\tau$ as required.

\section{The Symmetry Groups}
\label{sec:symmetry groups}

In this section we establish notation and summarise important results on the groups that we will make use of later. 
In particular, we will show that the $G_0$, related to the total symmetry groups of the four-torus sigma models via \eqref{G_symmetries}, are all subgroups of $W^+(E_8)$, the group of even-determinant Weyl transformations of $E_8$. This fact will be crucial  in \S\ref{sec:e8model}, as it makes it possible to equate the twined elliptic genera of the four-torus sigma models and the twined traces of the $E_8$ lattice VOSA.

By definition, $W^+(E_8)$ has a natural action on the $E_8$ lattice via its unique eight-dimensional irreducible representation and is a subgroup of $SO(8)$. 
Under the inclusion map $W^+(E_8)\xhookrightarrow{} SO(8)$, the center of $W^+(E_8)$ is mapped to the central $\ZZ_2$ subgroup of $SO(8)$, acting as $-id$ in the  eight-dimensional vector representation of $SO(8)$ in the former case and in the eight-dimensional non-trivial representation of $W^+(E_8)$ in the latter case. 
We denote by $\iota_v$ the generator of this latter central subgroup $\langle\iota_v\rangle\cong\ZZ_2< W^+(E_8)$. The corresponding central quotient is isomorphic to the finite simple group $O^+_8(2)$, the group of linear transformations of the vector space $\FF_2^8$ preserving a certain quadratic form. (See e.g. \cite{ATLAS} for a discussion of this.) 
In other words, we have  $$W^+(E_8)\cong\langle \iota_v\rangle.O^+_8(2) ~. $$

Recall that $G_1$, related to $G_0$ as in \eqref{G1G0},  can be identified with  subgroups of $W^+(E_8)$ that fix an $E_8$ sublattice of rank at least $4$ \cite{t4paper}.
Since $\iota_v$ does not preserve any subspace in the eight-dimensional vector representation of $W^+(E_8)$, we conclude that $\iota_v\not \in G_1$, and by combining the inclusion $G_1\xhookrightarrow{} W^+(E_8)$  and the projection $W^+(E_8)\xrightarrow[]{\pi'} O^+_8(2)$ we obtain an injective homomorphism $G_1\to  O^+_8(2)$. As a consequence, the group  $G_1$ is always  isomorphic to a subgroup of $O^+_8(2)$.

To show that the discrete part of the sigma model symmetry group $G_0$ is always a subgroup of $W^+(E_8)$, it will be useful to consider the group  $\textrm{Spin}(8)$. The kernel of the spin covering map $\textrm{Spin}(8)\xrightarrow[]{\pi} SO(8)$ is an involution $\langle \iota_s\rangle\cong\ZZ_2$.
Considering $W^+(E_8)<SO(8)$, the preimage of the spin covering map is $\langle\iota_s\rangle.W^+(E_8)< \textrm{Spin}(8)$.  Its center can be  identified with the center of $\textrm{Spin}(8)$, given by $\langle\iota_s,\iota_v\rangle\cong\ZZ_2\times\ZZ_2$. We thus have that $$\langle\iota_s\rangle.W^+(E_8)\cong\langle\iota_s,\iota_v\rangle.O_8^+(2)~.$$

The kernel of the spin covering map $\textrm{Spin}(8)\xrightarrow[]{\pi} SO(8)$ is naturally identified with the kernel of the quotient map $G_0 \to G_1$  (cf. \eqref{G1G0}, Table \ref{diagram}). 
Indeed, the preimage of $G_1< W^+(E_8)< SO(8)$ in $\langle \iota_s\rangle.W^+(E_8)< \textrm{Spin}(8)$ is precisely the group $G_0\cong \langle\iota_s\rangle.G_1$. As we have seen in \S\ref{sec:twinedsigma}, in the sigma models $\iota_s$  acts by flipping the sign of all the fermions in the representation $\rho_\psi$ (cf. \eqref{def:gact_chi}).

At this point it is crucial to recall that $\textrm{Spin}(8)$ has a triality symmetry, i.e. an $S_3$ outer automorphism group. 
Also, it has 
one vector and the two spinor eight-dimensional irreducible representations, which we will denote by $\rho^s_\psi$, $\rho^s_{\rm e}$ and $\rho^s_{\rm o}$ respectively, and the action of triality on the group $\textrm{Spin}(8)$ extends to an  $S_3$ permutation action on the three representations $\rho^s_\psi$, $\rho^s_{\rm e}$ and $\rho^s_{\rm o}$.  This $S_3$ group also permutes the three non-trivial generators $\iota_v$, $\iota_s$, $\iota_v\iota_s$ of the center of $\textrm{Spin}(8)$, and in each of the three aforementioned eight-dimensional representations one of these generators acts trivially. Triality for $\textrm{Spin}(8)$ induces an $S_3$ group of outer automorphisms  of  $\langle\iota_s\rangle.W^+(E_8)\cong\langle\iota_s,\iota_v\rangle.O^+_8(2)$. 

As a result, the $G_0$ subgroup of $\langle\iota_s\rangle.W^+(E_8)$ has three representations, which we denote $\rho_\psi$, $\rho_{\rm e}$ and $\rho_{\rm o}$, corresponding to three eight-dimensional representations of $\textrm{Spin}(8)$,  that are permuted by the outer automorphisms of $\langle\iota_s\rangle.W^+(E_8)$.
As we have seen in \eqref{def:gact_chi}, in the sigma model the representation $\rho_\psi$ captures the action of the symmetry group $G_0$ on  the eight (left- and right-moving) NS-NS fermions $\chi^i, \chi^{i^*} \tilde{\chi}^i,\tilde{\chi}^{i^*}$. The other two representations, $\rho_{\rm e}$ resp. $\rho_{\rm o}$,  capture the action of $G_0$ on the Ramond-Ramond sector quantum states with even resp. odd fermion numbers. 
As mentioned before, in the representation $\rho_\psi$ the central involution $ \iota_s$ acts by flipping the signs of all fermions as well as all bosons (which has to be the case since $G_0$ preserves the superconformal algebra).
 On the other hand, in the representation $\rho_{\rm e}$ the central element of $G_0$ acts trivially, so that only the quotient $G_1$ acts faithfully on the RR ground states of even fermion numbers. This is also the representation where $G_1$ fixes a $4$-dimensional subspace (cf. Table \ref{t:classes}).

 Now the  $S_3$ outer automorphisms of $\langle\iota_s,\iota_v\rangle.O^+_8(2)$ guarantee that the quotient by any of the three generators of the central subgroup $\langle\iota_s,\iota_v\rangle$ is a group isomorphic to $W^+(E_8)$. 
 In particular, since $\iota_v \not\in G_1$ and hence $G_0\cong \langle\iota_s\rangle.G_1 < \langle \iota_s\rangle.W^+(E_8)$ does not contain the central involution $\iota_v$, the homomorphism $G_0\to W^+(E_8)$ induced by the projection
\begin{equation}
\langle \iota_s\rangle.W^+(E_8)\cong\langle\iota_s,\iota_v\rangle.O^+_8(2)~\to~ \left(\langle\iota_s,\iota_v\rangle.O^+_8(2)\right)/\langle \iota_v\rangle\cong W^+(E_8)
\end{equation}
is injective. Thus we have proved the following result. 
\begin{proposition}\label{prop:G0subgroupWplusE8}
For any four-torus sigma model the corresponding group $G_0$ is isomorphic to a subgroup of $W^+(E_8)$.
\end{proposition}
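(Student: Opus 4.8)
The plan is to build on the embedding $G_1 \hookrightarrow W^+(E_8)$ already available from \cite{t4paper}, where $G_1$ is identified with the subgroups of $W^+(E_8)$ fixing an $E_8$-sublattice of rank at least $4$, and to promote it to an embedding of the full group $G_0$. Since $G_0$ is a central extension of $G_1$ by the order-two subgroup generated by $(-1,-1)\in SU(2)_L^A\times SU(2)_R^A$, the only genuine content is to account correctly for this central $\ZZ_2$. My strategy is to realise $G_0$ inside $\Spin(8)$ and exploit triality.

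First I would fix the ambient groups. Using the $8$-dimensional representation $\rho_\psi$ on the fermions $\chi^i,\chi^{i^*},\tilde\chi^i,\tilde\chi^{i^*}$, I would place $G_0$ inside $\Spin(8)$, the double cover of $SO(8)\supset W^+(E_8)$. Writing $\iota_s$ for the generator of the kernel of $\Spin(8)\xrightarrow{\pi}SO(8)$, I would observe that $\iota_s$ acts precisely by flipping the sign of all fermions, i.e.\ it is the image of $(-1,-1)$; hence the kernel of $G_0\to G_1$ is identified with $\langle\iota_s\rangle$, and $G_0\cong\langle\iota_s\rangle.G_1$ sits inside $\langle\iota_s\rangle.W^+(E_8)$, the full preimage of $W^+(E_8)$ in $\Spin(8)$. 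I would then record that this preimage has centre equal to the centre $\langle\iota_s,\iota_v\rangle\cong\ZZ_2\times\ZZ_2$ of $\Spin(8)$, where $\iota_v$ is the generator mapping to $-\mathrm{id}$ under $W^+(E_8)\hookrightarrow SO(8)$, and that $\langle\iota_s\rangle.W^+(E_8)\cong\langle\iota_s,\iota_v\rangle.O_8^+(2)$.

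Next comes the decisive input: triality. The outer $S_3$ of $\Spin(8)$ permutes the three order-two central elements $\iota_v$, $\iota_s$, $\iota_s\iota_v$ together with the three inequivalent $8$-dimensional representations (the fermionic $\rho_\psi$ and the two RR-sector representations). Quotienting by $\iota_s$ returns $SO(8)$ and hence the known $G_1\hookrightarrow W^+(E_8)$; triality then guarantees that quotienting $\langle\iota_s,\iota_v\rangle.O_8^+(2)$ by the \emph{other} central involution $\iota_v$ likewise yields a group isomorphic to $W^+(E_8)$. To see that this quotient map is injective on $G_0$, I would argue $\iota_v\notin G_0$: since every element of $G_1$ fixes an $E_8$-sublattice of rank $\ge 4$ while $\iota_v=-\mathrm{id}$ fixes no nonzero subspace, we have $\iota_v\notin G_1$, and pulling this back along $G_0\to G_1$ forces $\iota_v\notin G_0$. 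Therefore $G_0\cap\langle\iota_v\rangle=\{1\}$, the composite $G_0\hookrightarrow\langle\iota_s\rangle.W^+(E_8)\twoheadrightarrow(\langle\iota_s,\iota_v\rangle.O_8^+(2))/\langle\iota_v\rangle\cong W^+(E_8)$ is injective, and the proposition follows.

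The main obstacle, and the place where a naive argument would fail, is precisely the triality step. Quotienting the central extension $\langle\iota_s,\iota_v\rangle.O_8^+(2)$ by $\iota_v$ rather than by $\iota_s$ is what actually produces an embedding of $G_0$; quotienting by $\iota_s$ only recovers the already-known $G_1\hookrightarrow W^+(E_8)$ and collapses the very central $\ZZ_2$ we are trying to retain. A priori the $\iota_v$-quotient is merely \emph{some} $\ZZ_2$-extension of $O_8^+(2)$, and there is no abstract reason for a central extension of $G_1$ to embed in $W^+(E_8)$ at all; it is the $S_3$ symmetry interchanging $\iota_v,\iota_s,\iota_s\iota_v$ that makes the three central quotients mutually isomorphic and thereby supplies the required copy of $W^+(E_8)$. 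The remaining bookkeeping — verifying $\iota_v\notin G_0$ and that the kernel of the quotient meets $G_0$ trivially — is then routine.
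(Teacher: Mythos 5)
Your proposal is correct and follows essentially the same route as the paper: embedding $G_0\cong\langle\iota_s\rangle.G_1$ into the preimage $\langle\iota_s\rangle.W^+(E_8)\cong\langle\iota_s,\iota_v\rangle.O_8^+(2)$ of $W^+(E_8)$ in $\Spin(8)$, invoking triality to identify the quotient by $\iota_v$ with $W^+(E_8)$, and using $\iota_v\notin G_1$ (hence $\iota_v\notin G_0$) to conclude that the resulting homomorphism $G_0\to W^+(E_8)$ is injective. Your closing remark correctly isolates the triality step as the essential ingredient, exactly as in the paper's argument.
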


The discussion of this section is summarized in the following diagram.
\begin{equation}\label{diagram}
\begin{matrix}
\textrm{Spin}(8) &  \overset{\resizebox{6pt}{!}{$\pi$}}{\longrightarrow} & SO(8) \\
\hookuparrow & ~ & \hookuparrow   \\
~~~~\langle\iota_s\rangle.W^+(E_8)\cong\langle \iota_v,\iota_s\rangle.O_8^+(2)~~ & \longrightarrow & W^+(E_8)\cong\langle\iota_v\rangle.O_8^+(2) \\
\big\downarrow & ~ & \big\downarrow \pi'  \\
\langle\iota_s\rangle.O_8^+(2)\cong W^+(E_8) & \overset{\resizebox{11pt}{!}{$\pi''$}}{\longrightarrow} & O_8^+(2) \\
\hookuparrow & ~ & \hookuparrow   \\
~G_0 &\longrightarrow & ~G_1
\end{matrix}
\end{equation}

\section{The VOSA} 
\label{sec:e8model}
In this section we discuss the VOSA side of the VOSA/sigma model correspondence in this case: the $E_8$ lattice VOSA $\vfe$. In  \S\ref{subsec:The Theory} we introduce the theory and set up our notation, and in \S\ref{subsec:Twined Trace} we outline the computation of the twined traces of this VOSA, and prove the main theorm (Theorem \ref{thm:twining}) of the paper.

\subsection{The Theory}
\label{subsec:The Theory}
The VOSA $\vfe$ is a $c=12$ chiral superconformal field theory (SCFT) with eight free chiral fermions $\beta^a(z)$ and eight free chiral bosons $Y^a(z)$, with $a=1,\dots,8$. Moreover, it has chiral vertex operators $V_\lambda(z)=c(\lambda):e^{\lambda\cdot Y}:$ corresponding to the $E_8$ lattice. In the above, we have $\lambda\in E_8$ and $c(\lambda)$ is the standard operator needed for locality  \cite{FLM, DolanGoddardMontague}.
The stress tensor is given by 
\begin{equation}
T=-\sum_{a=1}^8:\partial Y^a~\partial Y^a:-\frac{1}{4}\sum_{a=1}^8 :\beta^a\partial\beta^{a}:, 
\end{equation}
and an ${\cal N}=1$ structure is provided by the 
supercurrent $Q$, proportional to the combination
\begin{equation}
\sum_{a=1}^8 :\beta^a\partial Y^a:~.
\end{equation}  

The $8$ currents $\partial Y^b$ form a $\mathfrak{u}(1)^8$ bosonic algebra, while the $28$ currents $:\beta^a\beta^b:$ generate a fermionic Kac-Moody algebra $\mathfrak{so}(8)_1$. 
Let $F$ be the eight-dimensional real vector space spanned by the fermions $\beta^a$. To facilitate the comparison with the sigma models, we split $F$ into two four-dimensional subspaces $F=X\oplus \bar X$ such that  $X$ is spanned by $\beta^a$ for $a=1,\dots, 4$ and $\bar X$ is spanned by $\beta^b$ for $b=5,\dots, 8$. 
As usual, it is convenient to work with the complex fermions 
\begin{gather}
\begin{split}
\gamma^i &:= \frac{1}{\sqrt{2}}(\beta^{i}+i\beta^{i+2})~, \\
\bar \gamma^i &:= \frac{1}{\sqrt{2}}(\beta^{i+4}+i\beta^{i+6})~,
\end{split}
\end{gather}
for $i=1,2$. 
The splitting of $F$ leads to the subalgebra $\mathfrak{so}(4)_1\oplus \mathfrak{so}(4)_1$ of  the fermionic Kac-Moody algebra $\mathfrak{so}(8)_1$.  
Focussing on the first $\mathfrak{so}(4)_1\cong\mathfrak{su}(2)_1\times \mathfrak{su}(2)_1$, corresponding to $X\subset F$, the two factors of $\mathfrak{su}(2)_1$ are generated by $J_X^{1,2,3}$ and $A_X^{1,2,3}$ respectively, completely analogous to the sigma model case 
(\eqref{R_symmetry_currents} and \eqref{def:sigma_su2A})
upon replacing the $\chi$s with $\gamma$s.

At the level of the zero-modes, 
we have 
\begin{equation}
\begin{split}
&SO(X)=(SU(2)^A_X\times SU(2)^J_{X})/(-1,-1)\cong SO(4)~, \\
&SO(\bar X)=(SU(2)^A_{\bar X}\times SU(2)^J_{\bar X})/(-1,-1)\cong SO(4)~.
\end{split}
\end{equation}
Note that all four $SU(2)$s above preserve the ${\cal N}=1$ superconformal algebra. 

Next we discuss the quantum states of the above model. 
We will sometimes refer to the space of states of this VOSA as an NS sector, since the chiral fermions satisfy the antiperiodic boundary condition. 
 One can also construct a canonically twisted module for this VOSA, i.e. a Ramond sector with periodic boundary conditions for the fermions. The Ramond sector contains $2^{8/2}=16$ ground states, forming a representation of the Clifford algebra of the fermionic zero modes. 
A convenient basis for these ground states may be denoted
 \begin{equation}
\label{twisted_gs}
|r\rangle:=|r_1,r_2,r_3,r_4\rangle~,~~r_1,r_2,r_3,r_4\in\left\lbrace\pm\frac{1}{2}\right\rbrace~. 
\end{equation}
Similar to the case of the sigma models \eqref{eqn:gndstates}, the Fock space ground states are then given by $|\lambda;r\rangle:=V_\lambda(0)|r\rangle$, where $\lambda\in E_8$.

With the sigma model elliptic genus \eqref{def:EG} in mind we define the following twisted module trace.
\be\label{def:VOSAtrace}
Z(\tau,z):=\Tr_{\textrm{tw}}\left[(-1)^{F}~y^{J_0^{X,3}}~q^{L_0-\frac{c}{24}}\right]\ 
\ee

The action of the operator ${J_0^{X,3}}$ on the oscillators and the ground states is completely analogous to its counterpart in the sigma models. Namely, it acts as a number operator for the fermionic oscillators, counting $\gamma^j_n$ excitations (with $n\leq-1$) as $+1$ and $\gamma^{j^*}_n$ excitations as $-1$, for $j=1,2$, while on the ground states \eqref{twisted_gs} it acts as
\begin{equation}
{J_0^{X,3}}|r\rangle=(r_1+r_2)|r\rangle~.
\end{equation}
Similarly, the fermion number operator is defined as $(-1)^F:=(-1)^{{J_0^{X,3}}+{J_0^{{\bar X},3}}}$, and acts on the ground states as
\begin{equation}
(-1)^F|r\rangle=(-1)^{{J_0^{X,3}}+{J_0^{{\bar X},3}}}|r\rangle=(-1)^{r_1+r_2+r_3+r_4}|r\rangle~.
\end{equation}
From this it follows immediately that states built on the ground states $|r\rangle$ with opposite signs of $r_3$ (or $r_4$) lead to opposite contributions to the trace $Z(\tau,z)$ and hence the trace vanishes. 
In the next subsection we will see that, similar to the sigma models, the trace is generically not vanishing when twined by a symmetry. 

\subsection{Twined Traces}
\label{subsec:Twined Trace}
Recall (Proposition \ref{prop:G0subgroupWplusE8}) that the symmetry groups $G_0$ of the four-torus sigma models may be regarded as subgroups of $W^+(E_8)$. 
We may thus identify them with symmetry groups of $\vfe$ which act on the $E_8$ lattice by even-determinant Weyl automorphisms, according to the vector representation $\rho_\psi$. 
The lattice $E_8$ is naturally contained in $F$, the 8-dimensional real vector space spanned by the fermions $\beta^a$, so we have $G_0< W^+(E_8)< SO(F)$. 
As discussed in \S\ref{subsec:sym_T4},  the groups $G_0$ are contained in an $SU(2)_L\times SU(2)_R$ subgroup of $SO(4)_L\times SO(4)_R\subset SO(8)$, and thus they do not mix the spaces $\Pi_L$ and $\Pi_R$. We can  further identify the vector spaces $X=\Pi_L$ and $\bar X=\Pi_R$, so that $G_0$ is contained in $SU(2)_X^A\times SU(2)_{\bar X}^A$ (and commutes with $SU(2)^J_X$ and $SU(2)^J_{\bar X}$) when acting on the $E_8$ lattice of the VOSA. The action of $G_0$ is then lifted to automorphisms of the $E_8$ VOSA that preserve the $\mathcal{N}=1$ supercurrent $Q$. 
(One may choose lifts where all phases are trivial. Consult \S\ref{app:cocycles} for details.)
As a result, for each $g\in G_0$ we may define the following $g$-twined trace in the twisted module for the $E_8$ VOSA
\be\label{eqn:VOSAtwinedPF}
Z_g(\tau,z):=\Tr_{\textrm{tw}}\left[g~(-1)^{F}~y^{J_0^{X,3}}~q^{L_0-\frac{c}{24}}\right]\ ,
\ee
generalising \eqref{def:VOSAtrace}. 

Analogous to the sigma models (\refeq{eigenvalues_vector}), the above   $g$-twined trace naturally decomposes into three factors,
\be\label{eqn:Zgthreefactors}
Z_g(\tau,z)=Z_g^{\text{osc}}(\tau,z)Z_g^{\text{gs}}(z)Z_g^{E_8}(\tau)~,
\ee
capturing the contribution from the oscillators, the fermionic ground states, and the $E_8$ lattice chiral operators, respectively. 

Choosing a convenient basis for the fermions we observe that the action of $g$ is precisely the same as in \eqref{def:gact_chi}, with $\chi^i$ replaced by $\gamma^i$ and $\chi^{i^*}$ replaced by $\gamma^{i^*}$, $\tilde\chi^i$ replaced by $\bar\gamma^{i}$ and $\tilde\chi^{i^*}$ replaced by $\bar\gamma^{i^*}$ for $i=1,2$.
As a result, the oscillators give a factor of
\small
\begin{equation}
\begin{split}
Z^{\text{osc}}_g(\tau,z)&=\prod_{n=1}^\infty \frac{(1-\zeta_L yq^n)(1-\zeta^{-1}_L yq^n)(1-\zeta_L y^{-1}q^n)(1-\zeta^{-1}_L y^{-1}q^n)(1-\zeta_R  q^n)^2(1-\zeta^{-1}_R  q^n)^2}{(1-\zeta_L q^n)^2(1-\zeta^{-1}_L q^n)^2(1-\zeta_R  q^n)^2(1-\zeta^{-1}_R  q^n)^2}\\
	&=\prod_{n=1}^\infty \frac{(1-\zeta_L yq^n)(1-\zeta^{-1}_L yq^n)(1-\zeta_L y^{-1}q^n)(1-\zeta^{-1}_L y^{-1}q^n)}{(1-\zeta_L q^n)^2(1-\zeta^{-1}_L q^n)^2}\ .
\end{split}
\end{equation}
\setlength{\parindent}{\baselineskip}
\setlength{\parskip}{0em}
\renewcommand{\baselinestretch}{1.3}

Similarly,  the group action on the fermionic ground states is given by
\begin{equation}
g|r\rangle=\zeta_L^{A_0^{X,3}}\zeta_R^{A_0^{\bar X,3}}|r\rangle=\zeta_L^{r_1-r_2}\zeta_R^{r_3-r_4}|r\rangle~,
\end{equation}
leading to the contribution
\be
Z^{\text{gs}}_g(\tau,z)=y^{-1}(1-\zeta_L y)(1-\zeta^{-1}_L y)(1-\zeta_R)(1-\zeta^{-1}_R)=2(1-\Re(\zeta_R))(y^{-1}+y-2\Re(\zeta_L))\ .
\ee

The contribution from the $E_8$ lattice is
\be
Z_g^{E_8}(\tau)=\sum_{\lambda\in \left(E_8\right)^{\rho_{\psi}(g)}} \xi_g(\lambda)~q^{\frac{\lambda^2}{2}}\ ,
\ee
where $\left(E_8\right)^{\rho_\psi(g)}$ is the sublattice of $E_8$ fixed by $g$ (which acts on the lattice according to the $\rho_\psi$ representation of $G_0$),
and $\xi_g(\lambda)$ are phases analogous to those in the sigma models \eqref{w-m_phases} that can be chosen to be trivial.

We now state and prove the main result of the paper.
\begin{theorem}
\label{thm:twining}
For every $g\in G_0$ for any of the possible groups $G_0$ we have
\be
Z_g(\tau,z)=\phi^G_g(\tau,z)\ .
\ee
\end{theorem}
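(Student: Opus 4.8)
The plan is to exploit the parallel three-factor decompositions \eqref{eqn:Zgthreefactors} and \eqref{eigenvalues_vector} and to match the factors one at a time. First I would observe that the oscillator contributions coincide: once the right-moving factors cancel between numerator and denominator, the product formula for $Z_g^{\text{osc}}(\tau,z)$ collapses to exactly the expression already obtained for $\phi_g^{\text{osc}}(\tau,z)$. The ground-state factors $Z_g^{\text{gs}}(z)$ and $\phi_g^{\text{gs}}(z)$ are likewise given by literally the same closed form. Hence the theorem reduces to the single identity of lattice contributions, $Z_g^{E_8}(\tau)=\phi_g^{\text{w--m}}(\tau)$.

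Both of these are theta series of positive-definite fixed sublattices, all phases being trivial once the standard lift is chosen (\S\ref{app:cocycles}). I would next dispose of the degenerate cases, organised by the eigenvalues $\zeta_L,\zeta_R$ of \eqref{g_action}. If $\zeta_R=1$ the ground-state factor vanishes on both sides, so $Z_g=0=\phi^G_g$. If $\zeta_L\neq 1$ and $\zeta_R\neq 1$, then $g$ has no nonzero fixed vectors on either $X\cong\Pi_L$ or $\bar X\cong\Pi_R$, so $(E_8)^{\rho_\psi(g)}=\{0\}=(\Gamma^{4,4}_{\textrm{w--m}})^{g}$ and both theta series equal $1$. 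The only substantive case is $\zeta_L=1,\ \zeta_R\neq 1$: here $g$ fixes the four-dimensional subspace $X=\Pi_L$ pointwise while acting without nonzero fixed vectors on $\bar X=\Pi_R$, so the two lattice sums become the theta series of $E_8\cap X$ and of $\Gamma^{4,4}_{\textrm{w--m}}\cap\Pi_L$ respectively.

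It then remains to show that these two positive-definite lattices of rank at most four have equal theta series, for which it suffices to exhibit an isometry between them. Here I would invoke the classification of four-torus sigma model symmetries in \cite{t4paper}, together with the identification $X=\Pi_L$, $\bar X=\Pi_R$ of \S\ref{subsec:Twined Trace}: under the common embedding into $\textrm{Spin}(8)$ used in Proposition \ref{prop:G0subgroupWplusE8}, each $g$ acts on $\Gamma^{4,4}_{\textrm{w--m}}$ and on $E_8$ through the same element of $SU(2)^A_L\times SU(2)^A_R$ in the vector representation $\rho_\psi$, and \cite{t4paper} identifies the invariant lattice $\Gamma^{4,4}_{\textrm{w--m}}\cap\Pi_L$ of the symmetric moduli point with the corresponding fixed sublattice $E_8\cap X$ as a positive-definite lattice. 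Feeding this isometry into the theta series yields $Z_g^{E_8}=\phi_g^{\text{w--m}}$, and hence the theorem.

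I expect the genuine obstacle to be precisely this last step. The two group actions are identified by construction, through the vector representation $\rho_\psi$ and the common $\textrm{Spin}(8)$ description; but $\Gamma^{4,4}_{\textrm{w--m}}$ and $E_8$ are inequivalent as lattices (they differ in signature), so the coincidence of their fixed sublattices is not a formal consequence of the matching of group actions and instead rests on the detailed description of the symmetric points and their invariant lattices in \cite{t4paper}. By contrast, the oscillator and ground-state factors, as well as the two degenerate cases, are settled by direct inspection of the formulas already displayed.
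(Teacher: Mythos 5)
Your overall strategy is exactly the paper's: match the oscillator and ground-state factors by inspection, dispose of the cases $\zeta_R=1$ (both sides vanish) and $\zeta_L,\zeta_R\neq 1$ (both lattice sums are trivial), and reduce to showing $\left(E_8\right)^{\rho_\psi(g)}\cong\left(\Gamma^{4,4}_{\textrm{w--m}}\right)^{g}$ when $\zeta_L=1$ and $\zeta_R\neq 1$. You also correctly identify this last isomorphism as the only substantive point.

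There is, however, one missing step in how you close that last case. The fixed sublattices of $E_8$ tabulated in \S4 of \cite{t4paper} are computed for the representation $\rho_{\rm e}$ (the action on even-fermion-number RR ground states), not for the vector representation $\rho_\psi$ through which $g$ acts on the $E_8$ lattice inside $\vfe$. These two eight-dimensional representations are related only by a triality outer automorphism of $\langle\iota_s\rangle.W^+(E_8)$, so the equality $\left(E_8\right)^{\rho_\psi(g)}\cong\left(E_8\right)^{\rho_{\rm e}(g)}$ is not automatic: an outer automorphism can move a conjugacy class, in which case the two fixed sublattices could differ. The paper fills this in by listing the four relevant classes ($2A$, $2E$, $3E$, $4A$ in the notation of Table \ref{t:classes}) and checking from the character table of $W^+(E_8)$ that each is stable under the outer automorphism group, so that $\rho_\psi(g)$ and $\rho_{\rm e}(g)$ are conjugate and the fixed sublattices agree; only then can the table from \cite{t4paper} (namely $D_4$, $A_1^4$, $A_2^2$, $D_4$, matching $\left(\Gamma^{4,4}_{\textrm{w--m}}\right)^{g}$ in each case) be invoked. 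Your argument asserts that \cite{t4paper} directly identifies $E_8\cap X$ in $\rho_\psi$ with the invariant winding--momentum lattice, which conflates the two representations; you need the triality/character-table check (or an equivalent case-by-case verification in $\rho_\psi$) to make the citation apply.
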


\begin{proof}
To begin we note that, from the preceeding discussion, it is evident that for each $g\in G_0$ we have
\begin{equation}
Z_g^{\text{osc}}=\phi_g^{\text{osc}}~,~~Z_g^{\text{gs}}=\phi_g^{\text{gs}}~.
\end{equation}
So we require (see (\refeq{eigenvalues_vector}), (\ref{eqn:Zgthreefactors})) to show that $Z_g^{E_8}=\phi_g^{\text{w-m}}$.
Since  we have $Z_g^{\text{gs}}=\phi_g^{\text{gs}}=0$ whenever $\zeta_R=1$, we may focus solely on the case that $\zeta_R\neq1$. Moreover, if both $\zeta_L,\zeta_R\neq1$ then $Z_g^{E_8}=\phi_g^{\text{w-m}}=1$, as both lattices $\left(E_8\right)^{\rho_\psi(g)}$ and $\left(\Gamma^{4,4}_\textrm{w--m}\right)^{g}$ are empty in this case. Therefore, we only need to prove that whenever $\zeta_L=1$ and $\zeta_R\neq 1$, the fixed sublattice $\left(E_8\right)^{\rho_\psi(g)}$ is isomorphic to $\left(\Gamma^{4,4}_\textrm{w--m}\right)^{g}$. We will achieve this by performing a case-by-case analysis. There are only four classes in $\rho_\psi$ with $\zeta_L=1$ and $\zeta_R\neq 1$. In the notation explained in \S\ref{app:sym}, they are  2A, 2E, 3E, 4A (see Table \ref{t:classes}).

To proceed we note that by inspecting the character table of $W^+(E_8)$ we may deduce that the aforementioned classes are necessarily fixed by the action of any outer automorphism. 
Since the representations $\rho_\psi$ and $\rho_{e}$ are related by such triality outer automorphisms (cf. \S\ref{sec:symmetry groups}), we deduce that for these classes we have $\left(E_8\right)^{\rho_\psi(g)}\cong \left(E_8\right)^{\rho_e(g)}$, the latter being the lattice fixed by $\langle g \rangle \subseteq G_0$ in the representation $\rho_e$. In \S4 of \cite{t4paper}, both lattices $\left(E_8\right)^{\rho_e(g)}$ and  $\left(\Gamma^{4,4}_\textrm{w--m}\right)^{g}$ were described in detail. In particular, it was shown that they are 
as in (\ref{eqn:rankfourlattices}).
\begin{equation}\label{eqn:rankfourlattices}
\begin{array}{c|c|c|c|c}
~ & 2A & 2E & 3E & 4A \\
\hline \left(E_8\right)^{\rho_e(g)}\left(\cong \left(E_8\right)^{\rho_\psi(g)}\right) & D_4 & A_1^4 & A_2^2 & D_4 \\
\hline \left(\Gamma^{4,4}_\text{w--m}\right)^{g} & D_4 & A_1^4 & A_2^2 & D_4
\end{array}~~
\end{equation}
From (\ref{eqn:rankfourlattices}) we see that the fixed sublattice of the winding-momentum lattice of the four-torus sigma model and the fixed sublattice of the $E_8$ lattice are isomorphic in each case. This completes the proof.
\end{proof}

\section{Orbifolds} 
\label{sec:orbifolds}

In this section we investigate the extent to which the diagram Figure \ref{fig:Z2diagram} commutes, or not, with an arbitrary symmetry in place of the specific $\ZZ_2$ action indicated. We will demonstrate that in fact the diagram commutes for all possible choices, at least if we assume a certain claim about orbifolds of four-torus sigma models.
We regard this result---Proposition \ref{pro:orbifold}---as further evidence that the VOSA/sigma model correspondence for four-torus sigma models proposed herein represents a natural structure. 

The claim about orbifold sigma models we will require to assume is the statement that: 
\begin{quote}
{\it The orbifold of a four-torus sigma model 
by a discrete supersymmetry preserving symmetry is either a sigma model with $T^4$ target or a sigma model with K3 target.}
\end{quote} 
This claim follows, for example, from the conjecture 
that
the only  ${\cal N}=(4,4)$ SCFTs with four spectral flow generators, central charge $c=\bar c=6$ and discrete spectrum come from sigma models with $T^4$ or K3 target space. This conjecture is widely believed to be true (see e.g. \cite{hiker}) and was implicitly assumed in early string theory literature. Here we refer to it as the {\em uniqueness conjecture}.

Alternatively, the above claim on four-torus sigma model orbifolds is supported by the following heuristic argument which is independent of the uniqueness conjecture.
Call a  
symmetry $g$ of a sigma model $\cT$ with target $X$ {\em geometric} 
if it is lifted (cf. \S\ref{app:cocycles}) from a symmetry $\bar g$ of the target space $X$. Then the 
orbifold 
of $\cT$ by $g$ should be a sigma model on the orbifold 
of $X$ by $\bar g$. Any orbifold of a four-torus is a singular limit of K3 surfaces, so the claim about orbifolds should hold at least for geometric symmetries. 

For more general symmetries note that it can be shown, independently of the uniqueness conjecture (see e.g. \cite{hiker}), that the elliptic genus of an ${\cal N}=(4,4)$ SCFT 
with four spectral flow generators
and $c=\bar c=6$ 
is either $0$ or coincides with the K3 elliptic genus. 
Furthermore, if the elliptic genus is 0 then the corresponding sigma model has $T^4$ target\cite{hiker}. 
So, if the elliptic genus of an orbifold is $0$, there is no doubt that it is a sigma model on $T^4$. 

To handle the case that the elliptic genus of the orbifold is non-vanishing we recall the {\em reverse orbifold construction}: 
If $\cT$ is a sigma model and $g$ is a discrete supersymmetry preserving symmetry of $\cT$ then the orbifold $\cT'$ of $\cT$ by $g$ 
has a distinguished symmetry $g'$ with the property that the orbifold of $\cT'$ by $g'$ is $\cT$. 
(See e.g. \cite{lamshi} for an analysis of this
in the VOA setting.)

The supersymmetry preserving symmetries of sigma models with K3 target have been classified in \cite{gaberdiel_k3}, and this allows us to determine the pairs $(\cT',g')$, with $\cT'$ a K3 sigma model and $g'$ a symmetry of $\cT'$, for which the orbifold of $\cT'$ by $g'$ is a sigma model on $T^4$. (One just checks if the elliptic genus of the orbifold vanishes or not.)
So by the reverse orbifold construction we obtain a corresponding set of pairs $(\cT,g)$, with $\cT$ a sigma model on $T^4$ and $g$ a symmetry of $\cT$, for which $\cT$ is an orbifold of a K3 sigma model $\cT'$ and $g$ is the corresponding distinguished symmetry such that the orbifold of $\cT$ by $g$ is $\cT'$. Finally, we can check case-by-case that every non-geometric four-torus sigma model symmetry, for which the corresponding orbifold elliptic genus is non-vanishing, occurs in such a pair. 
So there are simply no candidates for four-torus sigma model orbifolds by non-geometric symmetries with non-vanishing elliptic genus except for K3 sigma models.

Note that the claim above on four-torus sigma model orbifolds has a rigorous counterpart for VOSAs.
Namely, if $\hat g\in\Aut(\vfe)$ is the standard lift (cf. \S\ref{app:cocycles}) of a four-torus sigma model symmetry $g\in W^+(E_8)$ 
then the orbifold of $\vfe$ by $\hat g$ is either isomorphic to $\vfe$ or to $V^{s\natural}$, the latter being the unique ${\cal N}=1$ VOSA with $c=12$ and vanishing weight $\frac12$ subspace \cite{Duncan07,conway}. 
We will establish this in the course of proving our next result, Proposition \ref{pro:orbifold}.
Note that a more general orbifolding of $\vfe$ might result in the VOSA that describes $24$ free fermions. Cf. e.g. \cite{refl}.

We now prove the main result of this section. For the formulation of this we assume the notation of (\ref{G_symmetries}).

\begin{proposition}
\label{pro:orbifold}
Let $\cT$ be a four-torus sigma model and let $g\in G_0<W^+(E_8)$ be a symmetry of $\cT$ that preserves the ${\cal N}=4$ superconformal algebra. 
Let $\hat g$ denote the standard lift of $g<W^+(E_8)$ to a symmetry of the VOSA $\vfe$ as described in \S\ref{app:cocycles}. 
If we assume that any orbifold of a four-torus sigma model by a discrete supersymmetry preserving symmetry is either a sigma model on $T^4$ or a sigma model on $K3$ then the orbifold of $\vfe$ by $\hat g$ is isomorphic to $\vfe$ or $\vsn$ according as the orbifold of $\cT$ by $g$ is a sigma model on $T^4$ or a sigma model on $K3$.
\end{proposition}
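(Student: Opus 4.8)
The plan is to isolate a single numerical invariant that separates $\vfe$ from $\vsn$, and to show that this invariant of the orbifold VOSA $\vfe/\langle\hat g\rangle$ is governed by the elliptic genus of the orbifold sigma model through a twisted extension of Theorem \ref{thm:twining}. I begin by recording the structural input. Since $\hat g$ is a standard lift (\S\ref{app:cocycles}) of a finite-order element of $W^+(E_8)$ fixing the supercurrent $Q$, the cyclic orbifold $\vfe/\langle\hat g\rangle$, assembled from the $\hat g$-invariants inside the sum of the $\hat g^a$-twisted modules ($a=0,\dots,N-1$, with $N$ the order of $\hat g$), is again a self-dual (holomorphic) $\mathcal{N}=1$ VOSA of central charge $12$, its $\mathcal{N}=1$ structure descending from $Q$. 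Because $Q\propto\sum_a\colon\!\beta^a\partial Y^a\!\colon$ pairs fermions with the bosonic currents $\partial Y^a$, the surviving supercurrent obstructs the purely fermionic ($24$ free fermion) alternative mentioned after the statement of the proposition. Hence, by the uniqueness of $\vsn$ as the $\mathcal{N}=1$ VOSA of $c=12$ with vanishing weight-$\tfrac12$ subspace \cite{Duncan07,conway}, the orbifold is isomorphic to $\vsn$ when its weight-$\tfrac12$ subspace vanishes and to $\vfe$ otherwise (the only remaining option being $8$ free fermions tensored with the unique self-dual $c=8$ lattice VOA, namely $\vfe$). This establishes the dichotomy and reduces the proposition to deciding which alternative occurs.

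Next I would identify the distinguishing invariant as the elliptic-genus-type trace of the preceding subsection. On $\vfe$ this trace $Z(\tau,z)$ vanishes identically, as shown at the end of \S\ref{subsec:The Theory}, whereas on $\vsn$ the analogous trace reproduces the non-vanishing K3 elliptic genus $\phi_{K3}$ \cite{derived}. Thus $\vfe$ and $\vsn$ are separated according as this trace equals $0$ or $\phi_{K3}$, and it remains to compute the corresponding trace $Z^{\mathrm{orb}}$ of $\vfe/\langle\hat g\rangle$. By the standard cyclic-orbifold prescription this is the average $Z^{\mathrm{orb}}=\tfrac1N\sum_{a,b}Z_{g^a,g^b}$ of $g^a$-twisted, $g^b$-twined traces, which I want to match against the sigma-model orbifold elliptic genus $\phi_{\mathrm{orb}}=\tfrac1N\sum_{a,b}\phi_{g^a,g^b}$. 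The $a=0$ terms already agree by the argument of Theorem \ref{thm:twining} (run with $g^b$ in place of $g$), so the task is the term-by-term identity $Z_{g^a,g^b}=\phi_{g^a,g^b}$ for all $a,b$.

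The main obstacle is exactly this twisted extension for $a\neq 0$: Theorem \ref{thm:twining} treats only the canonically twisted (Ramond) module of $\vfe$ twined by $g$, while the orbifold sum requires constructing the genuinely $g^a$-twisted modules of $\vfe$ and the $g^a$-twisted sectors of $\Sigma(T^4;\mu)$ and comparing their $g^b$-twined traces, including the twisted ground-state degeneracies and the cocycle phases. I would handle this in parallel with the untwisted case: the oscillator and fermionic-ground-state factors are fixed by the $\hat g^a$-eigenvalues on the fermions and agree manifestly, so everything again reduces to a lattice sum, now the $g^a$-shifted, $g^b$-fixed contribution of $E_8$ against the corresponding shifted winding-momentum sum of the sigma model. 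This I would establish using the triality identification $\rho_\psi\sim\rho_e$ of \S\ref{sec:symmetry groups} together with the explicit shifted/fixed-lattice data of \cite{t4paper}, the same mechanism that yielded (\ref{eqn:rankfourlattices}). Granting the identity $Z^{\mathrm{orb}}=\phi_{\mathrm{orb}}$, and invoking the fact quoted at the start of this section that $\phi_{\mathrm{orb}}$ vanishes precisely when the sigma-model orbifold has $T^4$ target and equals $\phi_{K3}$ precisely when it has K3 target, the chain of equivalences forces $\vfe/\langle\hat g\rangle\cong\vfe$ in the former case and $\cong\vsn$ in the latter. This proves the proposition and shows in particular that the diagram of Figure \ref{fig:Z2diagram} commutes for every admissible $g$.
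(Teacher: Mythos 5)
There is a genuine gap at the very first step, and it propagates. You dispose of the $24$-free-fermion possibility by arguing that the descent of the supercurrent $Q$ ``obstructs the purely fermionic alternative.'' This is not a valid exclusion: the VOSA of $24$ free fermions is itself one of the three self-dual $\mathcal{N}=1$ VOSAs of central charge $12$ classified in Theorem 3.1 of \cite{refl} (the very result that produces the trichotomy $\vfe$, $\vsn$, $24$ free fermions), so it certainly admits $\mathcal{N}=1$ structures, and the fact that the orbifold inherits a supercurrent from $Q$ says nothing about which of the three it is. Your fallback invariant does not rescue this: the twisted-module trace $Z(\tau,z)$ of \eqref{def:VOSAtrace} vanishes on the $24$-free-fermion VOSA for exactly the same ground-state cancellation reason that it vanishes on $\vfe$ (the $(-1)^F$-graded trace over the $2^{12}$-dimensional Clifford module of twisted fermion zero modes is zero once any complex fermion is uncharged under $J_0^{X,3}$). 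So in the case where the sigma-model orbifold has $T^4$ target, your argument cannot distinguish $\vfe$ from the free-fermion model, which is precisely the case the proposition must settle. Moreover $Z(\tau,z)$ is not an invariant of the VOSA alone---it depends on a choice of current $J_0^{X,3}$ surviving in the orbifold---whereas the paper avoids both problems by computing the genuine partition function $Z_{\hat g\text{-orb}}(\tau)=\frac{1}{|\hat g|}\sum_{k,\ell}{}_A^A Z_{\hat g^k}^{\hat g^\ell}(\tau)$, which takes three distinct values on the three candidates.

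A secondary but substantial issue is that the part of the argument you label ``the main obstacle''---the computation of the $\hat g^a$-twisted, $\hat g^b$-twined traces for $a\neq 0$---is left as a sketch, and this is where essentially all of the content of the paper's proof lies: one must control the cocycle phases $\ex(\alpha_B)$ on the twisted sectors (via Proposition 5.10 of \cite{holomvoas}) and verify that they cancel between the bosonic and fermionic factors, and one must account for the order doubling $|\hat g|=2|g|$ that occurs for class $2E$ (so the orbifold sum runs over $\ZZ/|\hat g|$, not $\ZZ/|g|$, and $\Theta_{\Lambda^{\hat g^2}}$ must be replaced by $\theta_3^4\theta_4^4$ rather than the $E_8$ theta series). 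A triality argument in the style of Theorem \ref{thm:twining} addresses only the fixed/shifted lattice identifications, not these phase and order subtleties, so the term-by-term identity $Z_{g^a,g^b}=\phi_{g^a,g^b}$ you rely on is asserted rather than established.
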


\begin{proof}
The orbifold of $\vfe$ by $\hat g$ is either $\vfe$ or $\vsn$ or the VOSA associated to $24$ free fermions according to Theorem 3.1 of \cite{refl}. To 
tell the three possibilities apart we can simply compute 
the partition function $Z_{\hat g\text{-orb}}(\tau)$ of the orbifold theory. It will develop that either $Z_{\hat g\text{-orb}}(\tau) = Z(\vfe;\tau)$ or $Z_{\hat g\text{-orb}}(\tau) = Z(V^{s\natural};\tau)$, where $Z(V;\tau)$ is the partition 
function of $V$. (In particular, the free fermion model will not arise.)

Let us denote the anti-periodic and periodic boundary conditions for the fermions by A and P, respectively. We are interested in the case where the fermions are in the $[A,A]$ sector.  The bosons will always have periodic boundary condition in the current context so we will not explicitly specify the boson boundary condition in what follows. 

Let $_D^{\tilde D} \hspace{-0.5pt}Z_h^g(\tau)$ denote the $h$-twisted, $g$-twined partition function of $\vfe$ in the sector where the fermions have $[D,\tilde D]$ boundary conditions, with $D,\tilde D\in\lbrace A,P\rbrace$. The orbifold partition function is then given by
\begin{equation}
\label{orb_part}
Z_{\hat g\text{-orb}}(\tau)=\frac{1}{|\hat g|}~\sum_{k,\ell\in \ZZ/|\hat g|}^{}     ~ _A^A\hspace{-0.5pt} Z_{\hat g^k}^{\hat g^\ell} (\tau),
\end{equation}
so we need to compute $_D^{\tilde D} \hspace{-0.5pt}Z_{\hat g^k}^{\hat g^\ell}$ for all $k,\ell\in \ZZ/|\hat g|$. (We have $|\hat{g}|=|g|$ for all cases except for when $g$ lies in the class $2E$, in which case $|\widehat{2E}|=2|2E|=4$. More details on this can be found in \S\ref{app:cocycles}.)

Recall that modular transformations changes  the twisting and twining boundary conditions according to 
\begin{equation}
\label{Z_modular}
\text{PSL}_2(\mathbb{Z}) \ni \gamma=\left(\begin{matrix} a & b \\ c & d \end{matrix}\right) : (h,g)\mapsto (g^c h^d,g^ah^b)
\end{equation}
Notice that $\gamma\in\text{PSL}_2(\mathbb{Z})$ implies that $(h,g)$ and $(h^{-1},g^{-1})$ correspond to equal partition functions, since in our case all fields are invariant (self-conjugate) under charge conjugation $C=S^2=(ST)^3$.
Additionally, modular transformations also mix the fermionic sectors $[A,A], [A,P], [P,A]$, while leaving the bosonic sector $[P,P]$ invariant. In particular, for a holomorphic VOSA of central charge $c$, the partition functions $_A^AZ$,  $_A^PZ$,  $_P^AZ$ span a $3$-dimensional representation $\rho_c:PSL_2(\ZZ)\to GL(3)$ given by
\begin{equation}
\label{ferm_mix}
\begin{split}
&\left(\begin{matrix}  _A^AZ\\  _A^PZ \\  _P^AZ \end{matrix}\right)\left(-\frac{1}{\tau}\right) = \rho_c(S) \left(\begin{matrix}  _A^AZ \\  _A^PZ\\  _P^AZ\end{matrix}\right)\left({\tau}\right) ~,\qquad \rho_c(S)=\left(\begin{matrix} 1 & 0 & 0  \\ 0 & 0 & 1  \\ 0 & 1 & 0  \end{matrix} \right)~, \\
&\left(\begin{matrix}  _A^AZ\\  _A^PZ \\  _P^AZ\end{matrix}\right)\left(\tau+1\right) = \rho_c(T) \left(\begin{matrix}  _A^AZ \\  _A^PZ \\  _P^AZ \end{matrix}\right) \left(\tau\right) ~,\qquad\rho_c(T)=\left(\begin{matrix} 0 & \text{e}(-{c\over 24}) & 0  \\ \text{e}(-{c\over 24}) & 0 & 0  \\ 0 & 0 & \text{e}({c\over 12}) \end{matrix} \right)~.
\end{split}
\end{equation}
Combining the above, we conclude that 
\be\label{eqn:twist_twine_transform}
\left(\begin{matrix}  _A^A\hspace{-0.5pt}Z_{\hat g^m}^{\hat g^n}\\  _A^P\hspace{-0.5pt}Z_{\hat g^m}^{\hat g^n}\\  _P^A\hspace{-0.5pt}Z_{\hat g^m}^{\hat g^n} \end{matrix}\right)(\tau)  = \ex(\alpha) ~\rho^{-1}_c\!\left(\left(\begin{smallmatrix}a&b\\c&d\end{smallmatrix}\right)\right)~\left(\begin{matrix}  _A^A\hspace{-0.5pt}Z_{1}^{g'}\\  _A^P\hspace{-0.5pt}Z_{1}^{g'} \\  _P^A\hspace{-0.5pt}Z_{1}^{g'} \end{matrix}\right)\!\left({a\tau+b\over c\tau+d}\right),
\ee
for some $(\begin{smallmatrix}a&b\\c&d\end{smallmatrix})\in\text{PSL}_2(\mathbb{Z})$ that can be determined from (\ref{Z_modular}), some
$g'\in \langle \hat g \rangle$ and some phase $\ex(\alpha):=e^{2\pi i \alpha}$. 

Let us use the fact that the VOSA $\vfe$ is the product of a (bosonic) holomorphic lattice VOA based on the $E_8$ lattice, and the VOSA generated by $8$ real (or four complex) free fermions, and that the symmetry $\hat g$ acts independently on these two algebras. As a consequence, the twisted-twined partition functions $ _A^A\hspace{-0.5pt}Z_{\hat g^k}^{\hat g^\ell}$ factorize as
\begin{equation}
_A^A\hspace{-0.5pt}Z_{\hat g^k}^{\hat g^\ell}= ~_A^A\hspace{-0.5pt}F_{\hat g^k}^{\hat g^\ell}B_{\hat g^k}^{\hat g^\ell}
\end{equation} into the product of the twisted-twined partition functions $~_A^A\hspace{-0.5pt}F_{\hat g^k}^{\hat g^\ell}$ and $B_{\hat g^k}^{\hat g^\ell}$ of the fermionic VOSA (with $[A,A]$ boundary conditions)  and the bosonic VOA, respectively.

We will consider the fermion and boson contributions separately, and then combine the results. 
Consider first the four free complex fermions, with $c_F=4$.  Let us denote the partition function in sector $[D,\widetilde{D}]$ by $^{\widetilde{D}}_DF$. Then we have
\begin{equation}
\label{free_ferm_part}
\begin{split}
&_A^AF(\tau)=
{\theta_3^4(\tau) \over \eta^4(\tau)}~, 
\\
&_A^PF(\tau)=
{\theta_4^4(\tau) \over \eta^4(\tau)}~, 
\\
&_P^AF(\tau)=
{\theta_2^4(\tau) \over \eta^4(\tau)}~, 
\\
&_P^PF(\tau)=
{\theta_1^4(\tau) \over \eta^4(\tau)} =0~,
\end{split}
\end{equation}
where we write $\theta_i(\tau,z)$ for the usual Jacobi theta functions and set $\theta_i(\tau):=\theta_i(\tau,0)$.
The sectors $_A^AF(\tau),_A^PF(\tau)_P^AF(\tau)$ transform as in \eqref{ferm_mix} under $PSL_2(\ZZ)$, with $c=4$.

Now consider a symmetry $\hat g$ acting on the fermions, with eigenvalues determined by the representation $\rho_\psi$, and denoted $\zeta_L=\ex(\alpha_L)$ and $\zeta_R=\ex(\alpha_R)$, where $\zeta_L$ and $\zeta_R$ are as in (\ref{def:gact_chi}). 
 Then the $\hat g^k$-twisted $\hat g^\ell$-twined partition function in the four sectors is given by
\begin{equation}\label{ferm_part}\begin{split}
_A^AF_{\hat g^k}^{\hat g^\ell}(\tau)
&=q^{(\hat\alpha^2_L+\hat\alpha_R^2)k^2}{\theta_3^2(\tau,\hat\alpha_L(k\tau+\ell)) \,\theta_3^2(\tau,\hat\alpha_R(k\tau+\ell)) \over \eta^4(\tau)}~,\\
_A^PF_{\hat g^k}^{\hat g^\ell}(\tau)
&=q^{(\hat\alpha^2_L+\hat\alpha_R^2)k^2}{\theta_4^2(\tau,\hat\alpha_L(k\tau+\ell)) \,\theta_4^2(\tau,\hat\alpha_R(k\tau+\ell)) \over \eta^4(\tau)}~,\\
_P^AF_{\hat g^k}^{\hat g^\ell}(\tau)
&=q^{(\hat\alpha^2_L+\hat\alpha_R^2)k^2}{\theta_2^2(\tau,\hat\alpha_L(k\tau+\ell)) \,\theta_2^2(\tau,\hat\alpha_R(k\tau+\ell)) \over \eta^4(\tau)}~,\\
_P^PF_{\hat g^k}^{\hat g^\ell}(\tau)
&=q^{(\hat\alpha^2_L+\hat\alpha_R^2)k^2}{\theta_1^2(\tau,\hat\alpha_L(k\tau+\ell)) \,\theta_1^2(\tau,\hat\alpha_R(k\tau+\ell)) \over \eta^4(\tau)}~,
\end{split}\end{equation}
where $0\le k,\ell<N$, and $\hat\alpha_{L,R}\equiv \alpha_{L,R}(k)$ are rational numbers such that $e(\hat\alpha_{L,R})=\zeta_{L,R}$ and $-\frac{1}{2}< \hat\alpha_Lk,\hat\alpha_Rk\le \frac{1}{2}$.  Up to a possible redefinition $\zeta_L\leftrightarrow \zeta_L^{-1}$ or $\zeta_R\leftrightarrow \zeta_R^{-1}$, one can restrict $0\le \hat\alpha_Lk,\hat\alpha_Rk\le \frac{1}{2}$. Notice that the expressions \eqref{ferm_part} are in general not invariant under $k\to k+N$ and $\ell\to \ell+N$, but they can change by a multiplicative constant phase (an $N$-th root of unity). This phenomenon reflects an ambiguity in the definition of the phases of $_D^DF_{\hat g^k}^{\hat g^\ell}$, that depend on the choice  of the action of $\langle \hat g\rangle$ on the $\hat g^k$-twisted module.

Next we consider  four free complex bosons on the $E_8$ torus, with $c_B=8$. The bosons naturally have periodic boundary conditions on both cycles of the torus. The corresponding partition function is
\begin{equation}
B(\tau):=\frac{\Theta_{E_8}(\tau)}{\eta(\tau)^8}~,
\end{equation}
where 
\be\Theta_{E_8}(\tau)=\frac{1}{2}\left(\theta_2(\tau)^8+\theta_3(\tau)^8+\theta_4(\tau)^8\right)=E_4(\tau)\label{eqn:thetaseriese8}
\ee
 is the theta series of the $E_8$ lattice, equal to the Eisenstein series of weight $4$. Under modular transformations the partition function transforms according to
\begin{equation}
\label{free_bosons_modular}
B(-\tfrac{1}{\tau})=B(\tau)~,~~B(\tau+1)=\text{e}(-\tfrac{1}{3})B(\tau)~.
\end{equation}

A symmetry $\hat g$ acts on the four complex bosons in the same way as for the fermions, leaving invariant the supersymmetry of the $E_8$ VOSA. 
The corresponding untwisted $\hat g^n $-twined partition function is thus given by
\small
\begin{equation}\label{eqn:twined_boson}
B_1^{\hat g^n}(\tau)=q^{-\frac{1}{3}}\prod_{k=1}^{\infty}\left(\frac{1}{1-\zeta_L^n  q^k}\right)^2\left(\frac{1}{1-\zeta_L^{-n} q^k}\right)^2\left(\frac{1}{1-\zeta_R^n  q^k}\right)^2\left(\frac{1}{1-\zeta_R^{-n} q^k}\right)^2 \Theta_{\Lambda^{\hat g^n}}(\tau)~,
\end{equation}
where $\Theta_{\Lambda^{\hat g^n}}(\tau)$ is the theta series of the sublattice fixed by $\hat g^n$ (except for the case that $g$ is of class $2E$, and $n=2$, wherein $\Theta_{\Lambda^{\hat g^n}}$ takes a slightly different meaning, as explained below).
When $\zeta_L^n , \zeta_R^n\neq 1$, one has $\Theta_{\Lambda^{\hat g^n}}=1$ and the above may be conveniently written as
\be B_1^{\hat g^n}(\tau)= 
(\zeta_L^{\frac{n}{2}}-\zeta_L^{-\frac{n}{2}})^2(\zeta_R^{\frac{n}{2}}-\zeta_R^{-\frac{n}{2}})^2 \frac{\eta(\tau)^4}{\theta_1^2(\tau,n\alpha_L)\theta_1^2(\tau,n\alpha_R)}.
\ee
The cases for which $\Theta_{\LL^{\hat g^n}}(\tau)$ is not identically $1$ are summarized in Table \ref{fixed_sublattices}, so that $\Theta_{\LL^{\hat g^n}}$ is the theta series of the $D_4$ lattice, for example, when $g$ is of class $2A$ or $4A$ and $n=1$. As hinted above, the case that $g$ belongs to $2E$ and $n=2$ is a bit more subtle. This is because $\hat g^2$ is non-trivial, even though $g$ has order $2$. We have
\begin{equation}
\hat{g}^2 \left(V_\lambda\right)=(-1)^{(\lambda,g(\lambda))} V_\lambda~,
\end{equation}
and the result of this is that $\Theta_{\LL^{\hat g^2}}$ should be interpreted as $\Theta_{\widetilde{E}_8}(\tau):=\theta_3^4(\tau)\theta_4^4(\tau)$, rather than just the theta series (\ref{eqn:thetaseriese8}) of $E_8$, when $g$ is of class $2E$.
\begin{table}[htbp]
	\small
	\centering
	{\renewcommand{\arraystretch}{1.5}
		\begin{tabular}{c|c|c|c|c|c|c|c}
			~ &  $\widehat{2A}$ & $\widehat{2E}$ & $\widehat{3E}$ &  $\widehat{4A}$ & $\widehat{-4A}$ & $\widehat{-3E}$ & $\widehat{6BC}$ \\ \hline\hline
			$\hat{g}$ &  $D_4$ & $A_1^4$  & $A_2^2$  & $D_4$ & --- & --- & --- \\ \hline
			$\hat{g}^2$ & ~ & $\widetilde{E_8}$  & $A_2^2$ & $D_4$ & $D_4$ & $A_2^2$ & --- \\ \hline
			$\hat{g}^3$ & ~ & ~  & ~ & ~ & ~ & --- & $D_4$ \\ 
	\end{tabular}}
	\caption{\small Fixed sublattices of $E_8$ in $\rho_\psi$, by powers of conjugacy classes of $W^+(E_8)$}
	\label{fixed_sublattices}
\end{table}
\setlength{\parindent}{\baselineskip}
\setlength{\parskip}{0em}
\renewcommand{\baselinestretch}{1.3}

The whole set of bosonic twisted-twined partition functions $B_{\hat g^k}^{\hat g^\ell}$ can be recovered from the untwisted ones $B_{1}^{\hat g^n}$ using the analog  of \eqref{eqn:twist_twine_transform} for the bosonic case, namely
\be\label{eqn:twist_twine_bosonic}
B_{\hat g^m}^{\hat g^n}(\tau) = \ex(\alpha_B) B_{1}^{g'}\!\left({a\tau+b\over c\tau+d}\right)\ .
\ee
for some phases $\ex(\alpha_B):=e^{2\pi i \alpha_B}$.

We need to have some control over the phases $\ex(\alpha_B)$ in (\ref{eqn:twist_twine_bosonic}). For orbifolds of holomorphic VOAs by cyclic groups, these phases were discussed in \cite{holomvoas}. More precisely, if $V$ is a simple, rational, $C_2$-cofinite, self-contragredient vertex operator algebra and $g$ is an automorphism of $V$ of order $N$ then the phases are governed by a 2-cocycle representing a class in $H^2(\ZZ_N,\ZZ_N)\cong \ZZ_N$. According to Proposition 5.10 of \cite{holomvoas}, the cohomology class depends on $2N^2\rho_1 \mod N$, where $\rho_1$ is the conformal weight of the irreducible $g$-twisted $V$-modules $V(g)$. Different cocycles in the same class correspond to different choices for the action of $\langle g\rangle$ on the twisted sectors.

It turns out that, upon combining the fermions and bosons into the full twisted twined partition functions $_D^D\hspace{-0.5pt}Z_{\hat g^k}^{\hat g^\ell}= ~_D^D\hspace{-0.5pt}F_{\hat g^k}^{\hat g^\ell}B_{\hat g^k}^{\hat g^\ell}$,  the phases $\ex(\alpha_B)$ always cancel against the analogous phases for the fermionic contribution, so that the phases $\ex(\alpha)$ in \eqref{eqn:twist_twine_transform} are trivial.

For example, when  $\zeta_L^{n},\zeta_R^{n}\neq 1$, where $n=\gcd(k,\ell)$, one obtains
\begin{equation}
B_{\hat g^k}^{\hat g^\ell}(\tau)=(\zeta_L^{\frac{n}{2}}-\zeta_L^{-\frac{n}{2}})^2(\zeta_R^{\frac{n}{2}}-\zeta_R^{-\frac{n}{2}})^2q^{-(\hat\alpha_L^2+\hat\alpha_R^2)k^2}\frac{\eta(\tau)^4}{\theta_1^2(\tau,\hat\alpha_L(k\tau+\ell))\theta_1^2(\tau,\hat\alpha_R(k\tau+\ell))}
\end{equation} where $0\le \hat \alpha_L,\hat\alpha_R\le 1/2$, so that, combining the fermions and bosons, we obtain
\begin{equation}\label{complete}\begin{split}
_A^A\hspace{-0.5pt}Z_{\hat g^k}^{\hat g^\ell}&=(\zeta_L^{\frac{n}{2}}-\zeta_L^{-\frac{n}{2}})^2(\zeta_R^{\frac{n}{2}}-\zeta_R^{-\frac{n}{2}})^2\frac{\theta_3^2(\tau,\alpha_L(k\tau+\ell))\theta_3^2(\tau,\alpha_R(k\tau+\ell))}{\theta_1^2(\tau,\alpha_L(k\tau+\ell))\theta_1^2(\tau,\alpha_R(k\tau+\ell))}\ ,\\
_A^P\hspace{-0.5pt}Z_{\hat g^k}^{\hat g^\ell}&=(\zeta_L^{\frac{n}{2}}-\zeta_L^{-\frac{n}{2}})^2(\zeta_R^{\frac{n}{2}}-\zeta_R^{-\frac{n}{2}})^2\frac{\theta_4^2(\tau,\alpha_L(k\tau+\ell))\theta_4^2(\tau,\alpha_R(k\tau+\ell))}{\theta_1^2(\tau,\alpha_L(k\tau+\ell))\theta_1^2(\tau,\alpha_R(k\tau+\ell))}\ ,\\
_P^A\hspace{-0.5pt}Z_{\hat g^k}^{\hat g^\ell}&=(\zeta_L^{\frac{n}{2}}-\zeta_L^{-\frac{n}{2}})^2(\zeta_R^{\frac{n}{2}}-\zeta_R^{-\frac{n}{2}})^2\frac{\theta_2^2(\tau,\alpha_L(k\tau+\ell))\theta_2^2(\tau,\alpha_R(k\tau+\ell))}{\theta_1^2(\tau,\alpha_L(k\tau+\ell))\theta_1^2(\tau,\alpha_R(k\tau+\ell))}\ .
\end{split}
\end{equation} Using the modular properties of Jacobi theta functions, it is easy to verify that \eqref{eqn:twist_twine_transform} holds with $\rho_c$ given by \eqref{ferm_mix} with $c=12$  and with trivial phases $\ex(\alpha)$. An analogous result holds when $\zeta_L^{n}=1$ or $\zeta_R^{n}= 1$, $n=\gcd(k,\ell)$, although the formulae \eqref{complete} are not valid in this case.

Combining the above we may verify case-by-case that $Z_{\hat g\text{-orb}}(\tau) = Z(\vfe;\tau)$ whenever the $g$-orbifold of the four-torus sigma model is again a four-torus sigma model, and $Z_{\hat g\text{-orb}}(\tau) = Z(V^{s\natural};\tau)$ whenever the $g$-orbifold of the four-torus sigma model is a K3 sigma model, which is what we required to show.
 \end{proof}

\section{Reflection}
\label{sec:reflection}

The procedure of reflection on a non-chiral theory entails mapping all right-movers to left-movers, resulting in a holomorphic theory that may or may not be consistent. In \cite{k3refl} such a procedure was used to show that the K3 sigma model with $\ZZ_2^8:\mathbb{M}_{20}$ symmetry can be consistently reflected to give the Conway moonshine module VOSA $V^{s\natural}$. Moreover, the necessary and sufficient conditions that allow for reflection in a general theory were studied in detail. 

In this section we demonstrate that a similar reflection relation holds between a specific four-torus sigma model and the VOSA $\vfe$. In other words, we verify that Property \ref{property-ref} 
of VOSA/sigma model correspondences
holds for $\vfe$ and four-torus sigma models. 
To formulate this result precisely we first note that, according to \cite{t4paper}, there exists a unique point $\mu^\ast\in {\cal M}(T^4)$ such that the corresponding sigma model $\Sigma(T^4;\mu^\ast)$ has $G_0 \cong \mathcal{T}_{24}\times_{C_3}\mathcal{T}_{24}$. Now we may state the main result of this section.
\begin{theorem}
\label{thm:reflection}
The image of $\Sigma(T^4;\mu^\ast)$ under the reflection operation is a VOSA isomorphic to $\vfe$.
\end{theorem}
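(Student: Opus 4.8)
The plan is to show that the reflection of $\Sigma(T^4;\mu^\ast)$ produces a holomorphic $c=12$ VOSA with the same characterising data as $\vfe$, and then appeal to a uniqueness statement to conclude the isomorphism. First I would recall from \cite{k3refl} the precise conditions under which reflection is consistent: one needs the right-moving content of the theory to organise into a structure that can be grafted onto the left-movers to yield a genuine chiral algebra. The special point $\mu^\ast$ is chosen precisely so that $G_0\cong\mathcal{T}_{24}\times_{C_3}\mathcal{T}_{24}$ is as large as possible, and the first step is to verify that the winding-momentum lattice $\Gamma^{4,4}_{\textrm{w--m}}$ at $\mu^\ast$ has the maximal symmetry needed for the reflection criteria of \cite{k3refl} to be satisfied. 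Concretely, I expect that at $\mu^\ast$ the lattice decomposes in a way that makes the reflected theory's bosonic part an $E_8$ lattice VOA, mirroring the $T^4\to K3$ versus $\vfe\to\vsn$ analogy already exploited in Proposition \ref{pro:orbifold}.

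The second and central step is to compute the relevant partition functions of the reflected theory and match them against those of $\vfe$. Since reflection sends $\bar q^{k_R^2/2}\mapsto q^{k_R^2/2}$, the non-chiral winding-momentum sum $\sum_k q^{k_L^2/2}\bar q^{k_R^2/2}$ over $\Gamma^{4,4}_{\textrm{w--m}}$ becomes a genuinely holomorphic lattice sum. I would verify that at $\mu^\ast$ this reflected lattice is isometric to $E_8$, so that the bosonic contribution to the reflected partition function is exactly $\Theta_{E_8}(\tau)/\eta(\tau)^8 = E_4(\tau)/\eta(\tau)^8$, matching the $B(\tau)$ of \S\ref{sec:orbifolds}. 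The fermionic sector, consisting of the eight real fermions $\psi^a,\tilde\psi^a$, reflects to eight chiral free fermions with $\mathfrak{so}(8)_1$ symmetry, reproducing the fermionic half of $\vfe$. Assembling these, the reflected NS partition function should coincide with $Z(\vfe;\tau)$, and I would likewise check that the $\mathcal{N}=1$ supercurrent $\sum_a :\psi^a j^a:$ reflects to the supercurrent $\sum_a:\beta^a\partial Y^a:$ of $\vfe$, so that the reflected object is not merely a VOA but a VOSA with the correct superconformal structure.

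The final step is to upgrade the equality of partition functions and superconformal data to an isomorphism of VOSAs. Here I would invoke the classification input already used in the paper: a holomorphic $c=12$ VOSA is either $\vfe$, $\vsn$, or the free-fermion theory $F_{24}$, and these are distinguished by the dimension of their weight-$\tfrac12$ subspace (equivalently by the leading coefficients of the partition function). Having matched the partition function to that of $\vfe$ rather than to $\vsn$ or $F_{24}$ in the preceding step pins down the isomorphism type. I expect the main obstacle to be the second step---verifying consistency of the reflection at $\mu^\ast$ and identifying the reflected lattice as $E_8$---since this requires exhibiting an explicit isometry between the reflected $\Gamma^{4,4}_{\textrm{w--m}}$ and $E_8$ and checking that all the locality/cocycle data survive reflection intact, rather than merely comparing graded dimensions. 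The symmetry-group input $G_0\cong\mathcal{T}_{24}\times_{C_3}\mathcal{T}_{24}$ and the Taormina--Wendland criteria of \cite{k3refl} are what I would lean on to make this identification rigorous.
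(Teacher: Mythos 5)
Your plan correctly isolates the crux of the argument---that reflection acts trivially on the oscillator content and that everything hinges on identifying the reflected winding--momentum lattice with $E_8$---but it stops exactly where the paper's proof actually begins. Saying ``I would verify that at $\mu^\ast$ this reflected lattice is isometric to $E_8$'' is the entire technical content of the theorem, and you give no mechanism for carrying it out. The paper does this by writing down the lattice $\Gamma^{4,4}_{\text{w--m}}$ at the point $\mu^\ast$ explicitly in quaternionic coordinates, observing that reflection is literally the signature flip $(a;b)\mapsto(a|b)$, and then proving (Lemma \ref{lattice_lem}) that the resulting positive-definite lattice coincides with the Hurwitz-quaternion model $\Lambda_{E_8}=\{\tfrac{p}{\sqrt2}(2|0)+\tfrac{q}{\sqrt2}(1-\mathbf{i}|1-\mathbf{i}) : p,q\in\mathcal H\}$ of $E_8$, via a two-sided containment argument on the congruence conditions defining the two sets. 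Without some such explicit isometry (or at least an identification of the reflected lattice as an even unimodular positive-definite rank-8 lattice, which forces it to be $E_8$), your step two is an assertion rather than a proof. Note also that the consistency of the reflection is not a separate input to be imported from the criteria of \cite{k3refl}: once the reflected lattice is shown to be even, unimodular and positive definite, the reflected theory is manifestly a lattice VOSA, so consistency comes for free.

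Your third step is also a detour the paper does not need. Once the reflected lattice is identified with $E_8$ and the eight right-moving fermions are grafted onto the left, the reflected object \emph{is} the $E_8$ lattice VOSA by construction; there is no need to match partition functions and then invoke the trichotomy $\vfe$/$\vsn$/$F_{24}$ from \cite{refl}. That route is not wrong---the three self-dual $c=12$ VOSAs are indeed distinguished by their weight-$\tfrac12$ subspaces---but it replaces a direct isomorphism with a character computation plus a classification theorem, and it still presupposes the lattice identification you have not supplied. In short: right skeleton, but the load-bearing lemma is missing.
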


For the proof of Theorem \ref{thm:reflection} it will be convenient to use a quaternionic description of  the relevant lattices.
Let $\mathbb{H}$ be the space of quaternions, and write $\mathbf{i},\mathbf{j},\mathbf{k}$ for the imaginary units satisfying the usual quaternionic multiplication rule. Then $q\in \mathbb{H}$ can be written as $q=q_1+q_2\mathbf{i}+q_3\mathbf{j}+q_4\mathbf{k} $, where $q_1,q_2,q_3,q_4\in \RR$. We will often denote an element $q\in\mathbb{H}$ in terms of its components $(q_1,q_2,q_3,q_4)\in \RR^4$, and write $q=(q_1,q_2,q_3,q_4)$. We use the following norm on $\mathbb{H}$: 
\begin{equation}
||q||^2:=\sum_{i=1}^4q_i^2~,
\end{equation}
and the following notation for elements of $\mathbb{H}^2$ and $\mathbb{H}^{1,1}$
\begin{equation}
\begin{split}
& \mathbb{H}^2\ni (p|q):= (p_1,p_2,p_3,p_4|q_1,q_2,q_3,q_4)~, \\
& \mathbb{H}^{1,1}\ni (p;q):= (p_1,p_2,p_3,p_4;q_1,q_2,q_3,q_4)~, 
\end{split}
\end{equation}
where the corresponding norms are given by
\begin{equation}
||(p|q)||^2:=\sum_{i=1}^4 p_i^2+q_i^2~,~~ ||(p;q)||^2:=\sum_{i=1}^4 p_i^2-q_i^2~.
\end{equation}

The following lemma details a quaternionic realisation of the $E_8$ lattice.
\begin{lemma}\label{lattice_lem}
The eight-dimensional lattice defined by
\begin{equation}
\label{E8_wm}
\Gamma^8_{\text{w-m}}=\left\lbrace \frac{1}{\sqrt{2}}(a|b)~|~a_i,b_i\in\ZZ,~\sum_{i=1}^4 b_i\in 2\ZZ,~ a_i-b_i\equiv a_j-b_j\mod2~~\forall~~i,j\in\lbrace1,2,3,4\rbrace \right\rbrace~
\end{equation}
is a copy of the $E_8$ lattice. 
\end{lemma}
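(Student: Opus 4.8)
The plan is to verify directly that $\Gamma^8_{\text{w-m}}$ is an even unimodular lattice of rank $8$, since the $E_8$ lattice is the unique such lattice in dimension $8$ (up to isometry), so establishing these three properties (even, unimodular, positive-definite rank $8$) suffices. First I would confirm that $\Gamma^8_{\text{w-m}}$ is indeed a lattice, i.e.\ a full-rank subgroup of $\RR^8$ under the positive-definite norm $\|(a|b)\|^2 = \tfrac12\sum_i(a_i^2+b_i^2)$: the defining congruences ($\sum_i b_i$ even and $a_i-b_i$ all congruent mod $2$) are closed under addition, and one exhibits eight explicit generators spanning $\RR^8$ to see the rank is $8$.

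Next I would check \emph{integrality and evenness}. For a general element $v=\tfrac{1}{\sqrt2}(a|b)$ with $a_i,b_i\in\ZZ$, the norm is $\tfrac12\sum_i(a_i^2+b_i^2)$. The condition $a_i-b_i\equiv a_j-b_j \pmod 2$ means all the parities $a_i-b_i$ agree; combined with $\sum_i b_i\in 2\ZZ$ this should force $\sum_i(a_i^2+b_i^2)\in 4\ZZ$, giving $\|v\|^2\in 2\ZZ$ (even). Integrality of the bilinear form between two lattice vectors follows by a parallel parity count on $\sum_i(a_ia_i'+b_ib_i')$, using $x^2\equiv x\pmod2$ to reduce squares to linear terms. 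I expect the cleanest route is to split into the two cases according to whether the common parity $a_i-b_i$ is even or odd and track the sums modulo $2$.

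Finally I would establish \emph{unimodularity} by computing the discriminant, most efficiently via the covolume: I would write down the $8\times 8$ generator matrix $M$ (rows being the chosen basis vectors, each scaled by $\tfrac{1}{\sqrt2}$) and show $|\det M|=1$, so that $\operatorname{covol}(\Gamma^8_{\text{w-m}})=1$; for an integral lattice of covolume $1$ this is equivalent to unimodularity. The scaling contributes $(\tfrac{1}{\sqrt2})^8=\tfrac{1}{16}$, so the integer part of the determinant must equal $16$; verifying this reduces to an index computation, namely that the integer-coordinate vectors $(a|b)$ satisfying the two congruences form a sublattice of $\ZZ^8$ of index $16$. The two independent congruence conditions ($\sum_i b_i\equiv 0$ and the three relations $a_i-b_i\equiv a_1-b_1$ for $i=2,3,4$) cut out a subgroup of $(\ZZ/2)^8$ of index $2^4=16$, which gives exactly the required factor.

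The main obstacle, and the step deserving the most care, is the evenness/integrality bookkeeping in the case analysis: the interplay between the ``all parities equal'' condition on $a_i-b_i$ and the ``$\sum b_i$ even'' condition must be handled so that \emph{both} self-norms and cross inner products come out in $2\ZZ$ and $\ZZ$ respectively. I would organize this by first reducing everything modulo $2$ using $a_i^2\equiv a_i$ and $a_ia_i'\bmod 2 = a_i\wedge a_i'$-type identities, and then simply enumerating the two parity cases; once the congruence arithmetic is set up correctly the remaining verifications are routine.
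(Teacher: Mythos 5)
Your proposal is correct, but it takes a genuinely different route from the paper. The paper proves the lemma by exhibiting an explicit bijection with a known quaternionic model of $E_8$: it recalls the Hurwitz quaternions $\mathcal{H}$ and the Conway--Sloane realisation $\Lambda_{E_8}\cong\{\tfrac{p}{\sqrt2}(2|0)+\tfrac{q}{\sqrt2}(1-\mathbf{i}|1-\mathbf{i}) \mid p,q\in\mathcal H\}$, then verifies the two containments $\Lambda_{E_8}\subseteq\Gamma^8_{\text{w-m}}$ and $\Gamma^8_{\text{w-m}}\subseteq\Lambda_{E_8}$ by writing out coordinates and solving for $p,q$ in terms of $(a|b)$. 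You instead verify the abstract characterisation --- even, positive definite, rank $8$, unimodular --- and invoke the uniqueness of the even unimodular lattice in dimension $8$. All of your steps check out: in the case where all $a_i\equiv b_i\pmod 2$ one gets $a_i^2+b_i^2\equiv 2b_i\pmod 4$ and the condition $\sum_i b_i\in2\ZZ$ gives evenness, while in the case $a_i\not\equiv b_i$ each summand is $\equiv 1\pmod 4$ and the four terms sum to $0\pmod 4$; integrality of the pairing then also follows by polarization from evenness, which is slightly cleaner than a separate parity count; and the four congruences are visibly independent functionals on $\FF_2^8$, giving index $16$ and hence covolume $16\cdot(1/\sqrt2)^8=1$. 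What each approach buys: yours is self-contained apart from the classical classification theorem, whereas the paper's avoids that theorem but leans on the quaternionic description of $E_8$ --- a choice that fits the surrounding section, since the same quaternionic coordinates are immediately reused in the proof of Theorem \ref{thm:reflection}, where the winding--momentum lattice at $\mu^*$ differs from \eqref{E8_wm} only by the signature flip $(a;b)\mapsto(a|b)$.
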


\begin{proof}
Recall that the Hurwitz quaternions are defined by 
\begin{equation}  
\mathcal{H}=\left\lbrace q\in\mathbb{H}~|~ (q_1,q_2,q_3,q_4)\in\ZZ^4\cup\left(\ZZ+\frac{1}{2}\right)^4 \right\rbrace\subset\mathbb{H}~.
\end{equation}
Then, according to \S2.6 of \cite{packings}, for example, we obtain a copy of the $E_8$ lattice in $\mathbb{H}^2$ by considering
\begin{equation}
\label{E8_hurw}
\Lambda_{E_8}\cong\left\lbrace \frac{p}{\sqrt{2}}(2|0)+\frac{q}{
\sqrt{2}}(1-\mathbf{i}|1-\mathbf{i})~~|~~p,q\in\mathcal{H} \right\rbrace, 
\end{equation}
where we write $q'(p|q) := (q'p | q'q)$.
In this realisation the $240$ roots of $E_8$ are expressed as follows,
\begin{equation}
\label{e8_roots}
\begin{split}
16 \text{~~roots of the form~~~} & \frac{1}{\sqrt{2}}(\pm2,0,0,0|0,0,0,0)~, \\
32 \text{~~roots of the form~~~} & \frac{1}{\sqrt{2}}(\pm1,\pm1,\pm1,\pm1|0,0,0,0)~, \\
192 \text{~~roots of the form~~~} & \frac{1}{\sqrt{2}}(\pm1,\pm1,0,0|\pm1,\pm1,0,0)~,
\end{split}
\end{equation}
where at the first line the $\pm2$ can be in any position, at the second line the four factors of $\pm1$ can be either all at the left or all at the right, and at the last line the pair of $\pm1$ at the right can either be at the same positions as the pair at the left or at complementary positions.

We claim that the sets defined by (\refeq{E8_hurw}) and (\refeq{E8_wm}) are the same. For this note that in terms of components we have
\be
p(2|0) + q (1-\mathbf{i}|1-\mathbf{i}) = 2(p_1,p_2,p_3,p_4|0)+ (q_1+q_2,-q_1+q_2, q_3-q_4,  q_3+q_4|q_1+q_2,-q_1+q_2, q_3-q_4,  q_3+q_4 ),
\ee
and it follows that $\Lambda_{E_8}\subseteq \Gamma^8_{\text{w-m}}$. To check that $ \Gamma^8_{\text{w-m}}\subseteq \Lambda_{E_8}$, we define, for every $ \frac{1}{\sqrt{2}}(a|b) \in  \Gamma^8_{\text{w-m}}$, 
\be
p_i := a_i -b_i , ~i\in \{1,2,3,4\}
\ee
and 
\be
q_{2i-1}:=  {1\over 2}(b_{2i-1}-b_{2i}), ~q_{2i}:=  {1\over 2}(b_{2i-1}+b_{2i}), ~~i\in \{1,2\}.
\ee
Then the condition $a_i-b_i\equiv a_j-b_j\mod2$ guarantees that $(p_1,p_2,p_3,p_4)\in\ZZ^4\cup\left(\ZZ+\frac{1}{2}\right)^4$, and  the condition $\sum_{i=1}^4 b_i\in 2\ZZ$ guarantees that  $(q_1,q_2,q_3,q_4)\in\ZZ^4\cup\left(\ZZ+\frac{1}{2}\right)^4$. This finishes the proof. 
\end{proof}

Now we are ready to prove Theorem \ref{thm:reflection}. 
\begin{proof}[Proof of Theorem \ref{thm:reflection}.]
 Recall that $\Sigma(T^4;\mu)$ has a simple description in terms of Fock space oscillators and vertex operators based on the winding-momentum lattice $\Gamma_{\rm w-m}(\mu)$ corresponding to the point $\mu \in {\cal M}(T^4)$. Since all right-moving oscillators are straightforwardly reflected to left-moving ones, the only non-trivial part of the proof is to show that the reflection of the winding-momentum lattice $\Gamma_{\rm w-m}(\mu^\ast)$ is isomorphic to the $E_8$ lattice.

At the moduli point $\mu^\ast$ of four-torus sigma model labelled by $\Lambda_{D_4}$, where the symmetry group is given by $G_0=\mathcal{T}_{24}\times_{C_3}\mathcal{T}_{24}$ in the notation of \cite{t4paper}, the even unimodular winding-momentum lattice is given in quaternionic language by
\begin{equation}
\Gamma^{4,4}_{\text{w-m}}=\left\lbrace \frac{1}{\sqrt{2}}(a;b)~|~a_i,b_i\in\ZZ,~\sum_{i=1}^4 a_i\in 2\ZZ,~ a_i-b_i\equiv a_j-b_j\mod2~~\forall~~i,j\in\lbrace1,2,3,4\rbrace \right\rbrace~.
\end{equation}
Reflecting $\Gamma^{4,4}_{\text{w-m}}$ amounts to changing the signature from $(4,4)$ to $(8,0)$, by sending $(a;b)\rightarrow(a|b)$ for all lattice vectors. This results precisely in the lattice $\Gamma^8_{\text{w-m}}$   which according to Lemma \ref{lattice_lem} is simply the $E_8$ lattice. This finishes the proof.   
\end{proof}

\section*{Acknowledgements}
We thank Shamit Kachru, Sarah Harrison, Theo Johnson-Freyd, Sander Mack-Crane, and Shu-Heng Shao for useful discussions. The work of M.C. and V.A. is supported by ERC starting grant H2020 \# 640159 and NWO vidi grant (number 016.Vidi.189.182). J.D. acknowledges support from the U.S. National Science Foundation (DMS 1203162, DMS 1601306), and the Simons Foundation (\#316779). M. C. also grately acknowledges the hospitality of the Mathematics Institute of Academica Sinica, as well as the National Center for Theoretical Sciences (NCTS) of Taiwan.

\appendix

\section{Sigma Model Symmetries} 
\label{app:sym}

\begin{table}[h]
	\begin{center}
	\footnotesize
		\begin{tabular}{|c|cccc||c|cccc||c|c|c|}
			\hline
			Class $\rho_{e}$ & 
			\multicolumn{4}{c||}{Non-trivial eigenv. in $\rho_{e}$}
			& Class $\rho_\psi$ &  	\multicolumn{4}{c||}{Eigenv. in $\rho_{\psi}$ (twice each)} & $o(g)$ & orb & $(E_8)^{\rho_e(g)}$\\[3pt]
			\hline
			$1A$ & 
			- & -&  -& -& $1A$ &$ 1$ & $ 1$ &   $ 1$  & $ 1$ & 1 & $T^4$ & rk$>4$\\
			$-1A$ & 
			- & -&  -& -& $-1A$ &$- 1$ & $- 1$ &   $- 1$  & $- 1$ & 2 & $K3$ & rk$>4$\\[3pt]
			$2B$
			& - & - & $-1$ & $-1$ & $2.2C$ & $\text{e}(\frac{1}{4})$ & $\text{e}(\frac{3}{4})$ & $\text{e}(\frac{1}{4})$ & $\text{e}(\frac{3}{4})$ & 4 & $K3$ & rk$>4$\\[3pt]
			$3A$
			& - & - & $\text{e}(\frac{1}{3})$ & $\text{e}(\frac{2}{3})$ & $3BC$ &$\text{e}(\frac{1}{3})$ & $\text{e}(\frac{2}{3})$ &  $\text{e}(\frac{1}{3})$ & $\text{e}(\frac{2}{3})$ & 3 & $K3$ & rk$>4$\\[3pt]
			$-3A$ 
			& - & - & $\text{e}(\frac{1}{3})$ & $\text{e}(\frac{2}{3})$ & $-3BC$ & $\text{e}(\frac{1}{6})$ & $\text{e}(\frac{5}{6})$ &  $\text{e}(\frac{1}{6})$ & $\text{e}(\frac{5}{6})$ & 6 & $K3$ & rk$>4$\\[3pt]
			$2A$ & 
			$-1$ & $-1$ & $-1$ & $-1$ & $2A$ &$ 1$ & $ 1$ & $-1 $& $- 1$ & 2 & $T^4$& $D_4$\\[3pt]
			$-2A$ & 
			$-1$ & $-1$ & $-1$ & $-1$ & $2A'$ & $- 1$ & $- 1$ & $ 1$& $ 1$ & 2 & $T^4$& $D_4$\\[3pt]
			$2E$ & 
			$-1$ & $-1$ & $-1$ & $-1$ & $2E$ &  $ 1$ & $ 1$ & $-1 $& $- 1$ & 2 & $T^4$& $A_1^4$\\[3pt]
			$-2E$ & 
			$-1$ & $-1$ & $-1$ & $-1$ & $2E'$ &  $- 1$ & $- 1$ & $ 1$& $ 1$  & 2 & $T^4$& $A_1^4$\\[3pt]
			$3E$ & 
			$\ex(\frac{1}{3})$ & $\text{e}(\frac{2}{3})$  & $\text{e}(\frac{1}{3})$& $\text{e}(\frac{2}{3})$ & $3E$ &  $ 1$ & $ 1$ &$\text{e}(\frac{1}{3})$ &  $\text{e}(\frac{2}{3})$  & 3 & $T^4$& $A_2^2$\\[3pt]
			$3E'$ & 
			$\text{e}(\frac{1}{3})$ & $\text{e}(\frac{2}{3})$  & $\text{e}(\frac{1}{3})$& $\text{e}(\frac{2}{3})$  & $3E'$ &$\text{e}(\frac{1}{3})$ &  $\text{e}(\frac{2}{3})$ & $ 1$ & $ 1$ & 3 & $T^4$& $A_2^2$\\[3pt]
			$-3E$ & 
			$\text{e}(\frac{1}{3})$ & $\text{e}(\frac{2}{3})$  & $\text{e}(\frac{1}{3})$& $\text{e}(\frac{2}{3})$  & $-3E$ &$ -1$ & $ -1$ & $\text{e}(\frac{1}{6})$ &  $\text{e}(\frac{5}{6})$ &  6 & $K3$ & $A_2^2$\\[3pt]
			$-3E'$ & 
			$\text{e}(\frac{1}{3})$ & $\text{e}(\frac{2}{3})$  & $\text{e}(\frac{1}{3})$& $\text{e}(\frac{2}{3})$  & $-3E'$ &$\text{e}(\frac{1}{6})$ &  $\text{e}(\frac{5}{6})$ & $ -1$ & $ -1$ & 6 & $K3$ & $A_2^2$\\[3pt]
			$4A$ 
			& $\text{e}(\frac{1}{4})$ & $\text{e}(\frac{3}{4})$  & $\text{e}(\frac{1}{4})$  & $\text{e}(\frac{3}{4})$& $4A$ & $ 1$ & $ 1$ & $\text{e}(\frac{1}{4})$ & $\text{e}(\frac{3}{4})$ &  4 & $T^4$& $D_4$\\[3pt]
			$4A'$ 
			& $\text{e}(\frac{1}{4})$ & $\text{e}(\frac{3}{4})$  & $\text{e}(\frac{1}{4})$ & $\text{e}(\frac{3}{4})$ & $4A'$ &$\text{e}(\frac{1}{4})$ & $\text{e}(\frac{3}{4})$ & $ 1$ & $ 1$ & 4 & $T^4$& $D_4$\\[3pt]
			$-4A$ 
			& $\text{e}(\frac{1}{4})$ & $\text{e}(\frac{3}{4})$  & $\text{e}(\frac{1}{4})$ & $\text{e}(\frac{3}{4})$ & $-4A$ &$- 1$ & $- 1$ & $\text{e}(\frac{1}{4})$ & $\text{e}(\frac{3}{4})$ &  4 & $K3$ & $D_4$\\[3pt]
			$-4A'$ 
			& $\text{e}(\frac{1}{4})$ & $\text{e}(\frac{3}{4})$  & $\text{e}(\frac{1}{4})$ & $\text{e}(\frac{3}{4})$ & $-4A'$ &$\text{e}(\frac{1}{4})$ & $\text{e}(\frac{3}{4})$ & $- 1$ & $ -1$ & 4 & $K3$ & $D_4$\\[3pt]
			$4C$ 
			& $\text{e}(\frac{1}{4})$ & $\text{e}(\frac{3}{4})$ & $-1$ & $-1$ & $8A$ &$\text{e}(\frac{1}{8})$& $\text{e}(\frac{7}{8})$ & $\text{e}(\frac{3}{8})$ &$\text{e}(\frac{5}{8})$ & 8 & $K3$ & $A_1A_3$\\[3pt]
			$-4C$ 
			& $\text{e}(\frac{1}{4})$ & $\text{e}(\frac{3}{4})$ & $-1$ & $-1$ & $-8A$ &  $\text{e}(\frac{3}{8})$ &$\text{e}(\frac{5}{8})$ &$\text{e}(\frac{1}{8})$& $\text{e}(\frac{7}{8})$ & 8& $K3$ & $A_1A_3$\\[3pt]
			$5A$ & 
			$\text{e}(\frac{1}{5})$ & $\text{e}(\frac{4}{5})$ & $\text{e}(\frac{2}{5})$ & $\text{e}(\frac{3}{5})$ & $5BC$ & $\text{e}(\frac{1}{5})$ & $\text{e}(\frac{4}{5})$ & $\text{e}(\frac{2}{5})$ & $\text{e}(\frac{3}{5})$ & 5 & $K3$ & $A_4$\\[3pt]
			$5A'$ & 
			$\text{e}(\frac{1}{5})$ & $\text{e}(\frac{4}{5})$ & $\text{e}(\frac{2}{5})$ & $\text{e}(\frac{3}{5})$ & $5BC'$ &$\text{e}(\frac{2}{5})$ & $\text{e}(\frac{3}{5})$ & $\text{e}(\frac{1}{5})$ & $\text{e}(\frac{4}{5})$  & 5 & $K3$ & $A_4$\\[3pt]
			$-5A$ & 
			$\text{e}(\frac{1}{5})$ & $\text{e}(\frac{4}{5})$ & $\text{e}(\frac{2}{5})$ & $\text{e}(\frac{3}{5})$ & $-5BC$ &$\text{e}(\frac{3}{10})$ & $\text{e}(\frac{7}{10})$ & $\text{e}(\frac{1}{10})$ & $\text{e}(\frac{9}{10})$ & 10 & $K3$ & $A_4$\\[3pt]
			$-5A$ & 
		$\text{e}(\frac{1}{5})$ & $\text{e}(\frac{4}{5})$ & $\text{e}(\frac{2}{5})$ & $\text{e}(\frac{3}{5})$ &  $-5BC'$ &$\text{e}(\frac{1}{10})$ & $\text{e}(\frac{9}{10})$ &$\text{e}(\frac{3}{10})$ & $\text{e}(\frac{7}{10})$ & 10 & $K3$ & $A_4$\\[3pt]
			$6A$ & 
			$\text{e}(\frac{1}{6})$ & $\text{e}(\frac{5}{6})$ & $-1$ & $-1$ & $6BC$ &$\text{e}(\frac{1}{3})$ & $\text{e}(\frac{2}{3})$ & $\text{e}(\frac{1}{6})$ & $\text{e}(\frac{5}{6})$ & 6 & $K3$ & $D_4$\\[3pt]
			$-6A$ & 
			$\text{e}(\frac{1}{6})$ & $\text{e}(\frac{5}{6})$ & $-1$ & $-1$ & $6BC'$ &$\text{e}(\frac{1}{6})$ & $\text{e}(\frac{5}{6})$ &$\text{e}(\frac{1}{3})$ & $\text{e}(\frac{2}{3})$ &  6 & $K3$ & $D_4$\\[3pt]
			$6D$ & 
			$\text{e}(\frac{1}{3})$ & $\text{e}(\frac{2}{3})$ & $-1$ & $-1$ & $12BC$ &$\text{e}(\frac{1}{12})$ & $\text{e}(\frac{11}{12})$ & $\text{e}(\frac{5}{12})$ & $\text{e}(\frac{7}{12})$ & 12 & $K3$ & $A_1^2A_2$\\[3pt]
			$-6D$ & 
			$\text{e}(\frac{1}{3})$ & $\text{e}(\frac{2}{3})$ & $-1$ & $-1$ & $-12BC'$ &$\text{e}(\frac{5}{12})$ & $\text{e}(\frac{7}{12})$ & $\text{e}(\frac{1}{12})$ & $\text{e}(\frac{11}{12})$ &  12 & $K3$ & $A_1^2A_2$\\[3pt]
			\hline
		\end{tabular}
	\end{center}
	\caption{
	}\label{t:classes}
\end{table}

In this appendix we record the cyclic symmetry subgroups of  four-torus sigma models. 
Given that $G_1 < O^+_8(2)$ and $G_0<W^+(E_8)$, we require to consider the lifts of relevant classes $X$ of $O_8^+(2)$ to $W^+(E_8)$. See \eqref{diagram}.
If there are two classes in the lift, they are denoted $\pm X$. We use the notation $2.2C$ to refer to the lift of the class $2C\subset O_8^+(2)$ to $W^+(E_8)$, which is a single class of order 4 rather than two classes $\pm 2C$. We follow \cite{ATLAS} for the naming of the classes.

Note that the set of possible $G_1$ is bijective to the set of subgroups of $W^+(E_8)$ which fix an $E_8$-sublattice of rank at least four, since there is always a rank four subspace in the representation $\rho_{\rm e}$ in $G_0$. The column ``non-trivial eigenvalues in $\rho_{\rm e}$" 
records the non-trivial eigenvalues in each case. 
Correspondingly, the $W^+(E_8)$ classes $\pm X$ in the columns ``Class $\rho_{\rm e}$'' denotes the preimage of the class $X\subset O^+_8(2)$ under the projection $\pi'$ of \eqref{diagram}.

In \S\ref{sec:symmetry groups} we have learned that this is not the only way to obtain a lift of a class of $ O^+_8(2)$ in the context of four-torus sigma models. In the column ``Class $\rho_{\psi}$'' we record the  preimage of the class $X\subset O^+_8(2)$ under the projection $\pi\mydprime$  in \eqref{diagram}. Note that the ``Class $\rho_{\psi}$'' and ``Class $\rho_{\rm e}$'', are of course related by a triality transformation which exchanges $\iota_s$ and $\iota_v$, and correspondingly $\rho_{\psi}$ and $\rho_{\rm e}$. By \eqref{def:gact_chi}, each eigenvalue appears twice in  $\rho_{\psi}$ and we therefore group the eight eigenvalues in four pairs (of identical values) and record just representative eigenvalues for each of these pairs. In the notation of  \eqref{def:gact_chi}, the first two eigenvalues are $\zeta_L$ and $\zeta_L^{-1}$ while the last two are  $\zeta_R$ and $\zeta_R^{-1}$.
The notation $\pm X'$ is a reminder that, the same $W^+(E_8)$ class can act differently on a four-torus sigma model by exchanging left- and right-movers. 

In the last part of Table \ref{t:classes} we write $o(g)$ for the order of the element in $G_0$ (i.e. in the faithful representation $\rho_\psi$), while the order in $G_1=G_0/\ZZ_2$ (i.e. in the unfaithful representation $\rho_{\rm e}$,) can be read off from the symbol of the class, since $G_1< O_8^+(2)$. We also indicate whether the orbifold by $g$ is a sigma model on $T^4$ or $K3$. 
Finally, we indicate the $\rho_{e}(g)$-fixed sublattice of $E_8$ if it has rank four, in which case the symmetry $g$ is non-geometric and appears only at a single point in the moduli space characterized by the fixed sublattice, which we record. If the rank is larger than four then the symmetry is geometric and it occurs in some family of models. 

\section{Cocycles and Lifts}
\label{app:cocycles}

In this appendix, we review some well-known results about the  OPE of vertex operators in toroidal sigma models and in lattice vertex operator algebras, with a particular focus on the so called `cocycle factors'. Some early references on the subject are \cite{Borcherds86,FLM} in the VOA literature and \cite{Gross:1985fr} in string theory; further references include \cite{DolanGoddardMontague,holomvoas,Borcherds}. In this section, we adopt the language of two dimensional conformal field theory: the lattice VOA version of our statements can be easily derived from the particular case of chiral CFTs.

Let us consider a (bosonic) toroidal conformal field theory, describing $d_+$ chiral and $d_-$ anti-chiral compact free bosons, whose discrete winding-momentum (Narain) lattice is an even unimodular lattice $L$ of dimension $d=d_++d_-$, whose bilinear form $(\cdot,\cdot):L\times L\to \ZZ$ has signature $(d_+,d_-)$. Note that such a lattice exists only when $d_+-d_-\equiv 0\mod 8$. If $d_-=0$, then the conformal field theory is chiral, and it can be described as a lattice vertex operator algebra based on the even unimodular lattice $L$. On the opposite extreme, if $d_+=d_-=d/2$, the CFT can be interpreted as a sigma model on a torus $T^{d/2}$. The supersymmetric versions of these models are obtained by adjoining $d^+$ chiral and $d^-$ anti-chiral free fermions. The properties we are going to discuss do not depend on whether the toroidal CFT is bosonic or supersymmetric, so we will focus on the bosonic case for simplicity. As discussed in \S\ref{subsec:sym_T4}, for a given unimodular lattice $L$, there is a whole moduli space of toroidal models based on $L$, whose points correspond to different decompositions $L\otimes \RR=\Pi_L\oplus\Pi_R$ into a positive definite subspace $\Pi_L$ and a negative definite one $\Pi_R$. Every vector $v\in L\otimes \RR$ can be decomposed accordingly as $v=(v_L,v_R)$. We can define positive definite scalar products on $\Pi_L$ and on $\Pi_R$, that are uniquely determined by the condition
\be (\lambda,\mu)=\lambda_L\cdot \mu_L-\lambda_R\cdot \mu_R\ ,
\ee for all $\lambda,\mu\in L\otimes \RR$.

The CFT contains the vertex operators $V_\lambda(z,\bar z)$, for each $\lambda\in L$, with OPE satisfying
\be\label{OPEcoc} V_\lambda(z,\bar z)V_\mu(w,\bar w)=\epsilon(\lambda,\mu) (z-w)^{\lambda_L\cdot \mu_L}(\bar z-\bar w)^{\lambda_R\cdot \mu_R} V_{\lambda+\mu}(w,\bar w)+\ldots
\ee where $\ldots$ are subleading (but potentially still singular) terms. In the chiral ($d_-=0$) case, one can simply set $\lambda_L=\lambda$ and $\lambda_R=0$ and similarly with $\mu$.
Here, $\epsilon:L\times L\to U(1)$ must satisfy
\begin{align}
&\epsilon(\lambda,\mu)=(-1)^{\bil{\lambda}{\mu}}\epsilon(\mu,\lambda)\label{cohclass}\\
&\epsilon(\lambda,\mu)\epsilon(\lambda+\mu,\nu)=\epsilon(\lambda,\mu+\nu)\epsilon(\mu,\nu)\qquad \qquad \text{(cocycle condition)}
\label{cocycle}\end{align}
in order for the OPE to be local and associative. Given a solution $\epsilon(\lambda,\mu)$ to these conditions, any other solution is given  by
\be \tilde \epsilon(\lambda,\mu)= \epsilon(\lambda,\mu)\frac{b(\lambda)b(\mu)}{b(\lambda+\mu)}\ ,
\ee for an arbitrary $b:L\to U(1)$. This change corresponds to a redefinition of the fields $V_\lambda$: if $V_\lambda(z,\bar z)$ obey the OPE \eqref{OPEcoc} with cocyle $\epsilon$, then the operators $\tilde V_\lambda(z,\bar z)=b(\lambda)V_\lambda(z,\bar z)$ obey \eqref{OPEcoc} with the cocycle $\tilde \epsilon$. Notice that if $b(\lambda+\mu)=b(\lambda)b(\mu)$ for all $\lambda,\mu\in L$ (i.e. if $b:L\to U(1)$ is a homomorphism of abelian groups), then $\epsilon$ is unchanged, and the transformation $V_\lambda(z,\bar z)\to b(\lambda)V_\lambda(z,\bar z)$ is a symmetry of the CFT, which is part of the $U(1)^d$ group generated by the zero modes of the currents. 

One can show that $\epsilon(\lambda,\mu)$ satisfying the conditions \eqref{cohclass} and \eqref{cocycle} can be chosen to take values in $\{\pm 1\}$. Furthermore, one can use the freedom in redefining $V_\lambda$ to set
\be \epsilon(0,\lambda)=\epsilon(\lambda,0)=1,\qquad\qquad \forall\lambda\in L,
\ee so that $V_0(z,\bar z)=1$. Cocycles satisfying this condition are sometimes called normalized.  Finally, one can choose $\epsilon$ such that\footnote{One further condition that is usually imposed is
	$\epsilon(-\lambda,\lambda)=1$ for all $\lambda\in L$. With this choice the general relation $(V_\lambda)^\dag=\epsilon(\lambda,-\lambda)V_{-\lambda}$ simplifies as $(V_\lambda)^\dag=V_{-\lambda}$. Another common choice is $\epsilon(-\lambda,\lambda)=(-1)^{\lambda^2/2}$. We will not impose any of these conditions.}
\be \epsilon(\lambda+2\nu,\mu)=\epsilon(\lambda,\mu+2\nu)=\epsilon(\lambda,\mu)\ , \qquad \forall \lambda,\mu,\nu\in L\ .
\ee If we require all these conditions, then $\epsilon$ determines a well defined function $L/2L\times L/2L\to \{\pm 1\}$.

More formally (see for example \cite{FLM}), the cocycle $\epsilon$ represents a class in the cohomology group $H^2(L,\ZZ/2\ZZ)$, where the lattice $L$ is simply regarded as an abelian group. These cohomology classes are in one to one correspondence with isomorphism classes of central extensions 
$$ 1\to \ZZ/2\ZZ\to \hat L\to L\to 1\ ,
$$ of the abelian group $L$ by $\ZZ/2\ZZ$. The specific cohomology class that is relevant for the toroidal CFT is uniquely determined by the condition \eqref{cohclass}. Using this formalism, the CFT can alternatively be defined by introducing a vertex operator $V_{\hat\lambda}$ for each element $\hat\lambda\in \hat L$ in this central extension. Then, the OPE of $V_{\hat\lambda}(z,\bar z)V_{\hat\mu}(w,\bar w)$ is analogous to \eqref{OPEcoc}, with $\epsilon(\lambda,\mu) V_{\lambda+\mu}$ replaced by $V_{\hat \lambda\cdot\hat \mu}$ (here, $\hat \lambda\cdot\hat \mu$ denotes the composition law in the extension $\hat L$, which is possibly non-abelian). Our previous description of the CFT can be recovered  by choosing a section $e:L\to \hat L$ and defining the vertex operators $V_\lambda:=V_{e(\lambda)}$ for each $\lambda\in L$. This leads to the OPE \eqref{OPEcoc}, where the particular cocycle representative $\epsilon$ depends on the choice of the section $e$ via $e(\lambda)e(\mu)=\epsilon(\lambda,\mu)e(\lambda+\mu)$.

\bigskip

An automorphism $g\in O(L)$ can be lifted (non-uniquely) to a symmetry $\hat g$ of the CFT such that
\be \hat g(V_\lambda (z,\bar z))=\xi_g(\lambda)V_{g(\lambda)}(z,\bar z)\ ,
\ee where $\xi_g:L\to U(1)$ must satisfy
\be\label{xigcond} \frac{\xi_g(\lambda)\xi_g(\mu)}{\xi_g(\lambda+\mu)}=\frac{\epsilon(\lambda,\mu)}{\epsilon(g(\lambda),g(\mu))}\ .
\ee As shown below, $\xi_g$ satisfying this condition always exists, and any two such $\xi_g,\tilde \xi_g$ are related by $\tilde \xi_g(\lambda)=\rho(g)\xi_g(\lambda)$, where $\rho:L\to U(1)$ is a homomorphism.  Furthermore, one can always find $\xi_g$ taking values in $\{\pm 1\}$ and such that
\begin{align} &\xi_g(0)=1\\ & \xi_g(\lambda+2\mu)=\xi_g(\lambda) \qquad \forall \lambda,\mu\in L\ .
\end{align} 
With these condition, $\xi_g$ induces a well-defined map $\xi_g:L/2L\to \{\pm 1\}$.

A constructive proof of these statements is as follows (see \cite{DolanGoddardMontague}). Choose a basis $e_1,\ldots, e_d$  for $L$. Define an algebra of operators $\gamma_i\equiv \gamma_{e_i}$, $i=1,\ldots,d$, satisfying\footnote{A slightly modified definition sets $\gamma_i^2=(-1)^{e_i^2/2}$. With the latter choice, one obtains $\epsilon(\lambda,-\lambda)=(-1)^{\lambda^2/2}$ for all $\lambda\in L$, and $\gamma_\lambda$ depends on $\lambda\mod 4L$ rather than $2L$. However, both $\epsilon$ and $\xi_g$ are still well defined on $L/2L$.}
\be \gamma_i^2=1\qquad \gamma_i\gamma_j=(-1)^{\bil{e_i}{e_j}} \gamma_j\gamma_i\ ,
\ee and for every $\lambda=\sum_{i=1}^d a_i e_i\in L$, set
\be \gamma_\lambda:=\gamma_1^{a_1}\cdots \gamma_d^{a_d}\ .
\ee Then, the following properties hold:
\be \gamma_0=1\qquad \qquad  \gamma_{\lambda+2\mu}=\gamma_{\lambda}\qquad \qquad \gamma_\lambda\gamma_\mu=(-1)^{\bil{\lambda}{ \mu}}\gamma_\mu\gamma_\lambda\ .
\ee Define $\epsilon:L\times L\to \{\pm1\}$ by
\be \gamma_\lambda\gamma_\mu=\epsilon(\lambda,\mu)\gamma_{\lambda+\mu}\ ,
\ee and, for every $g\in O(L)$, define $\xi_g:L\to \{\pm 1\}$ by
\be \gamma_{g(\lambda)}=\xi_g(\lambda)\gamma_{g(e_1)}^{a_1}\cdots \gamma_{g(e_d)}^{a_d}\ .
\ee  It is easy to verify that $\epsilon$ and $\xi_g$ satisfy all the properties mentioned above. In particular, this choice of $\xi_g$ is such that $\xi_g(e_i)=1$ for all the basis elements $e_i$.
It is clear that $\gamma_\lambda$, and therefore also $\epsilon$ and $\xi_g$, depend on $\lambda$ only mod $2L$.

The constraints that we imposed on $\xi_g$ still leave some freedom in the choice of the lift. There are two further conditions that one might want to impose:
\begin{itemize} \item[(A)] One might require $\hat g$ to have the same order $N=|g|<\infty$ as $g$. Notice that if $\hat g$ is a lift of a $g$ of order $N$, then
	\be \hat g^N(V_\lambda)=\xi_g(\lambda)\xi_g(g(\lambda))\cdots \xi_g(g^{N-1}(\lambda))V_\lambda\ ,
	\ee so that $\hat g^N=1$ if and only if
	\be \xi_g(\lambda)\xi_g(g(\lambda))\cdots \xi_g(g^{N-1}(\lambda))=1\qquad \forall \lambda\in L\ .
	\ee 
	\item[(B)] Alternatively, one might want $\xi_g(\lambda)$ to be trivial whenever $\lambda$ is $g$-fixed
	\be \xi_g(\lambda)=1\qquad \forall \lambda\in L^g\ ,
	\ee or, equivalently,
	\be \hat g(V_\lambda)=V_\lambda \qquad \forall \lambda\in L^g\ .
	\ee Lifts satisfying this property are usually called \emph{standard lifts}.
\end{itemize}

\begin{proposition}
	Every $g\in O(L)$ admits a \emph{standard lift} $\hat g$, i.e. such that $\hat g(V_\lambda)=V_\lambda$  for all $\lambda\in L^g$.
\end{proposition}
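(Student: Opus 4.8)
The plan is to start from the explicit lift produced by the $\gamma_\lambda$ construction recalled above, which furnishes some function $\xi_g:L\to\{\pm1\}$ satisfying \eqref{xigcond}, and then to correct it by a suitable homomorphism so that it becomes trivial on $L^g$. The freedom to do this is exactly the statement, recorded above, that any two solutions of \eqref{xigcond} differ by a homomorphism $\rho:L\to U(1)$; replacing $\xi_g$ by $\rho\cdot\xi_g$ leaves the right-hand side of \eqref{xigcond} unchanged, so the corrected function still defines a genuine lift.

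First I would restrict $\xi_g$ to the fixed sublattice $L^g=\{\lambda\in L\mid g(\lambda)=\lambda\}$. For $\lambda,\mu\in L^g$ we have $g(\lambda)=\lambda$ and $g(\mu)=\mu$, so the right-hand side of \eqref{xigcond} becomes $\epsilon(\lambda,\mu)/\epsilon(\lambda,\mu)=1$. Hence $\xi_g(\lambda+\mu)=\xi_g(\lambda)\xi_g(\mu)$ for all $\lambda,\mu\in L^g$, i.e. $\chi:=\xi_g|_{L^g}$ is a group homomorphism $L^g\to\{\pm1\}$. Making the lift standard therefore amounts to finding a homomorphism $\rho:L\to\{\pm1\}$ whose restriction to $L^g$ equals $\chi^{-1}=\chi$; then $\tilde\xi_g:=\rho\cdot\xi_g$ satisfies $\tilde\xi_g|_{L^g}=1$, which is precisely $\hat g(V_\lambda)=V_\lambda$ for all $\lambda\in L^g$.

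The remaining point, and the only structural input, is that such an extension exists. For this I would show that $L^g$ is a \emph{primitive} (saturated) sublattice of $L$: if $k\lambda\in L^g$ for some $\lambda\in L$ and integer $k\geq1$, then $k\bigl(g(\lambda)-\lambda\bigr)=0$ in the torsion-free group $L$, forcing $g(\lambda)=\lambda$, so $\lambda\in L^g$. Consequently $L/L^g$ is finitely generated and torsion-free, hence free abelian, so the short exact sequence $0\to L^g\to L\to L/L^g\to0$ splits and $L^g$ is a direct summand of $L$. Splitting the sequence lets me extend $\chi$ from $L^g$ to all of $L$ (e.g. by declaring the extension trivial on a chosen complement), producing the desired $\rho:L\to\{\pm1\}$.

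I expect the only place requiring care to be the extension step: one must not merely extend $\chi$ as a map but as a \emph{homomorphism}, and---if one insists on $\{\pm1\}$-valued lifts as in the normalisations adopted above---it is essential that $L^g$ be a direct summand rather than merely a subgroup. (Extending to $U(1)$ would be automatic from divisibility of $U(1)$, but the sharper $\{\pm1\}$ statement is what makes primitivity of $L^g$ the crux.) Everything else is a direct substitution into \eqref{xigcond}.
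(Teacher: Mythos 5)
Your proof is correct and follows essentially the same route as the paper's: both arguments rest on the observation that $\xi_g$ restricted to $L^g$ is a homomorphism (because $\epsilon(g(\lambda),g(\mu))=\epsilon(\lambda,\mu)$ there) together with the primitivity of $L^g$ in $L$, the only difference being that the paper builds the standard lift directly by running the $\gamma_\lambda$ construction on a basis of $L$ adapted to $L^g$, whereas you take an arbitrary lift and then kill $\xi_g|_{L^g}$ by multiplying by a character of $L$ extending it --- two equivalent ways of exploiting the same splitting $L\cong L^g\oplus(\text{complement})$. A minor bonus of your write-up is that you actually verify primitivity of $L^g$ (via torsion-freeness of $L/L^g$), which the paper asserts without proof.
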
 
\begin{proof}
	For all $\lambda, \mu\in L^g$, one has obviously $\frac{\epsilon(g(\lambda),g(\mu))}{\epsilon(\lambda,\mu)}=1$. Therefore, the restriction of $\xi_g$ to $L^g$ is a homomorphism $L^g\to \{\pm 1\}$, and it is trivial if and only if it is trivial on all elements of a basis of $L^g$. By the construction described above, one can always find a lift $\hat g$ such that $\xi_g$ is trivial for all the elements of a given basis of $L$. Choose a basis of $L^g$; since $L^g$ is primitive in $L$, this can be completed to a basis of $L$.  By choosing $\xi_g$ to be trivial on the elements of this basis, we obtain a lift $\hat g$ satisfying condition (B).
\end{proof}

Standard lifts are not unique, but they are all conjugate to one each other within the symmetry group of the CFT, as the following proposition shows. (The following two propositions are proved in \cite{holomvoas}.)

\begin{proposition}\label{standconj}
	Let $g \in O(L)$ and $\hat g$, $\hat g'$ be two lifts of $g$ with associated functions $\xi_g,\xi_g': L \to \{\pm 1\}$. Suppose $\xi_g=\xi_g'$ on the fixed-point sublattice $L^g$. Then $\hat g$ and $\hat g'$ are conjugate in the group of symmetries of the CFT.
\end{proposition}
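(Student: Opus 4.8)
The plan is to realise the conjugating element inside the $U(1)^d$ group of symmetries generated by the current zero modes. Recall from the discussion preceding the statement that any two lifts of a given $g$ differ by a homomorphism, so I may write $\xi_g'(\lambda)=\rho(\lambda)\,\xi_g(\lambda)$ for some homomorphism $\rho\colon L\to\{\pm1\}$. The hypothesis that $\xi_g$ and $\xi_g'$ agree on $L^g$ is then exactly the statement that $\rho$ restricts to the trivial character on $L^g=\ker(g-1)$. This reformulation is what I would aim to exploit.

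Next I would use that for every homomorphism $b\colon L\to U(1)$ the rescaling $h_b\colon V_\lambda\mapsto b(\lambda)V_\lambda$ is a symmetry of the CFT, lying in the $U(1)^d$ subgroup, as recorded earlier in this appendix. Conjugating $\hat g$ by $h_b$ and tracking the action on $V_\lambda$ gives $h_b\,\hat g\,h_b^{-1}(V_\lambda)=b(\lambda)^{-1}\xi_g(\lambda)\,b(g(\lambda))\,V_{g(\lambda)}$, so $h_b\,\hat g\,h_b^{-1}$ is again a lift of $g$, now with associated function $\lambda\mapsto b\bigl((g-1)\lambda\bigr)\,\xi_g(\lambda)$, where I used that $b$ is a homomorphism. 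Hence conjugation by $h_b$ modifies the associated function by the character $\lambda\mapsto b((g-1)\lambda)$, and it suffices to produce $b$ solving $b((g-1)\lambda)=\rho(\lambda)$ for all $\lambda\in L$; then $h_b\,\hat g\,h_b^{-1}=\hat g'$ and we are done.

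The core of the argument is to solve this last equation. I would first define $b$ on the sublattice $(g-1)L\subseteq L$ by $b((g-1)\lambda):=\rho(\lambda)$. This is well defined precisely because $\rho$ is trivial on $L^g$: if $(g-1)\lambda=(g-1)\lambda'$ then $\lambda-\lambda'\in\ker(g-1)=L^g$, whence $\rho(\lambda)=\rho(\lambda')$. The resulting map $(g-1)L\to U(1)$ is clearly a homomorphism. The remaining step is to extend it to a homomorphism $b\colon L\to U(1)$, which is possible because $U(1)$ is a divisible, hence injective, abelian group, so any homomorphism out of a subgroup of the free abelian group $L$ extends to all of $L$. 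The extended $b$ yields the desired symmetry $h_b\in U(1)^d$ conjugating $\hat g$ to $\hat g'$.

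I expect the only genuinely delicate point to be the well-definedness of $b$ on $(g-1)L$: this is exactly where the hypothesis $\rho|_{L^g}=1$ enters, and without it the construction would fail. Once well-definedness is secured, the extension to $L$ is automatic from injectivity of $U(1)$, and the rest is a short computation. Along the way I would double-check two conventions fixed earlier in this appendix: that two lifts of the same $g$ really differ by a character of $L$ (so that $\rho$ is a homomorphism), and that $h_b$ preserves the full CFT structure for any homomorphism $b$ — both of which follow from the normalisations already established.
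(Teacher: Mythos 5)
Your proof is correct. The paper does not actually supply its own argument for this proposition---it defers to \cite{holomvoas}---but the proof given there is essentially the one you reconstruct: realise the ratio $\rho=\xi_g'/\xi_g$ as a character of $L$ trivial on $L^g$, observe that conjugating $\hat g$ by a torus element $h_b\in U(1)^d$ shifts the lift function by $\lambda\mapsto b((g-1)\lambda)$, and solve $b((g-1)\lambda)=\rho(\lambda)$ on $(g-1)L$ (well-defined exactly because $\rho|_{L^g}=1$) before extending to $L$ by injectivity of $U(1)$.
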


Since the order and the twined genus of a lift $\hat g$ depends only on its conjugacy class within the group of symmetries, this proposition then tells us that these quantities only depend on the restriction of $\xi_g$ on the fixed sublattice $L^g$. In particular, when $g$ fixes no sublattice of $L$, all its lifts $\hat g$ are conjugate to each other.

The following result gives, for the standard lifts (i.e. for $\xi_g=1$ on $L^g$), the order of $\hat g$ and the action of every power $\hat g^k$ on the corresponding $g^k$-fixed sublattice $L^{g^k}$ 

\begin{proposition}\label{ordstand}
	Let $g\in O(L)$ and $\hat g$ be a standard lift (i.e. $\xi_g(\lambda)=1$ for all $\lambda \in L^g$). Then:
	\begin{enumerate}
		\item 	If $g$ has odd order $N$, then $\hat g^k(V_\lambda)=V_\lambda$
		for all $\lambda\in L^{g^k}$. In particular $\hat g$ has order $N$.
		\item 	If $g$ has even order $N$, then for all $\lambda \in  L^{g^k}$, 
		\be {\hat{g}}^k(V_\lambda) = \begin{cases} V_\lambda & \text{for $k$ odd,}\\
			(-1)^{\bil{\lambda}{g^{k/2}(\lambda)}}V_\lambda & \text{for $k$ even.}
		\end{cases}
		\ee 
		In particular $\hat g$ has order $N$ if $\bil{\lambda}{g^{N/2}(\lambda)}$ is even for all $\lambda\in L$ and order $2N$ otherwise.
	\end{enumerate}
\end{proposition}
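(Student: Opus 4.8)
The plan is to reduce all three assertions to a single formula for the scalar by which $\hat g^k$ acts on a fixed vertex operator. Since $\hat g(V_\lambda)=\xi_g(\lambda)V_{g(\lambda)}$, iterating gives $\hat g^k(V_\lambda)=\Phi_k(\lambda)\,V_{g^k(\lambda)}$ with $\Phi_k(\lambda):=\prod_{j=0}^{k-1}\xi_g(g^j(\lambda))$, so for $\lambda\in L^{g^k}$ (where $g^k(\lambda)=\lambda$) everything comes down to computing $\Phi_k(\lambda)$. I claim that
\[
\Phi_k(\lambda)=(-1)^{\bil{\lambda}{s_k(\lambda)}},\qquad s_k(\lambda):=\sum_{j=0}^{k-1}g^j(\lambda),
\]
for every $\lambda\in L^{g^k}$, and that the whole proposition follows from this identity by an elementary parity analysis together with the specialisations $k=N$ and $k=2N$.

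The core step is to prove this identity by telescoping. Setting $c(\mu,\nu):=\epsilon(\mu,\nu)/\epsilon(g(\mu),g(\nu))$, relation \eqref{xigcond} says $\xi_g(\mu)\xi_g(\nu)=\xi_g(\mu+\nu)\,c(\mu,\nu)$; peeling off one factor at a time yields $\Phi_k(\lambda)=\xi_g(s_k)\prod_{j=1}^{k-1}c(s_j,g^j(\lambda))$, where $s_j:=\sum_{i=0}^{j-1}g^i(\lambda)$ is the $j$-th partial sum (so $s_k=s_k(\lambda)$). Because $g^k(\lambda)=\lambda$ the full sum $s_k$ is $g$-fixed, so $\xi_g(s_k)=1$ by the standard-lift hypothesis and only the product of $\epsilon$-ratios survives. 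I would then use the cocycle identity \eqref{cocycle} to rewrite each denominator $\epsilon(g(s_j),g^{j+1}(\lambda))$ in terms of $\epsilon(s_{j+1},g^{j+1}(\lambda))$ and factors of the form $\epsilon(\lambda,\,\cdot\,)$; this splits the product into two telescoping chains that collapse to $\epsilon(\lambda,s_k)/\epsilon(s_k,\lambda)$, which equals $(-1)^{\bil{\lambda}{s_k(\lambda)}}$ by \eqref{cohclass}. I expect this bookkeeping to be the main obstacle: the only real content is to identify the partial sums $s_j$ as the correct telescoping variable and to check that the stray $\epsilon$-factors cancel in pairs.

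It remains to read off the parity of $\bil{\lambda}{s_k(\lambda)}=\sum_{j=0}^{k-1}\bil{\lambda}{g^j(\lambda)}$. Evenness of $L$ kills the $j=0$ term, and orthogonality of $g$ combined with $g^k(\lambda)=\lambda$ gives $\bil{\lambda}{g^j(\lambda)}=\bil{\lambda}{g^{k-j}(\lambda)}$, so the remaining terms pair as $j\leftrightarrow k-j$. For $k$ odd every term cancels and $\Phi_k(\lambda)=1$; for $k$ even only the self-paired term $j=k/2$ survives and $\Phi_k(\lambda)=(-1)^{\bil{\lambda}{g^{k/2}(\lambda)}}$, which is statement~2. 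To obtain statement~1 for odd $N$, I would observe that the $\langle g\rangle$-orbit of $\lambda\in L^{g^k}$ has length $m$ dividing $\gcd(k,N)$, hence $m$ is odd; when $k$ is even this forces $k/m$ even, so $k/2$ is a multiple of $m$, $g^{k/2}(\lambda)=\lambda$, and the surviving exponent $\bil{\lambda}{\lambda}$ is even---whence $\Phi_k(\lambda)=1$ for all $k$. Finally the order assertions follow by taking $k=N$: for odd $N$ one gets $\hat g^N=1$, while for even $N$ one gets $\hat g^N(V_\lambda)=(-1)^{\bil{\lambda}{g^{N/2}(\lambda)}}V_\lambda$, an involution that is trivial exactly under the stated evenness condition; since $\hat g$ projects onto $g$ its order is a multiple of $N$, and $\Phi_{2N}\equiv1$ shows $\hat g^{2N}=1$, giving order $N$ or $2N$ as claimed.
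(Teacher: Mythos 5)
Your proof is correct. Note that the paper does not actually give an argument for Proposition \ref{ordstand}; it defers the proof to \cite{holomvoas}, so your write-up supplies a self-contained derivation rather than an alternative to one in the text. I verified the one step you flagged as the main obstacle: with $s_j=\sum_{i=0}^{j-1}g^i(\lambda)$, induction on \eqref{xigcond} gives $\Phi_k(\lambda)=\xi_g(s_k)\prod_{j=1}^{k-1}c(s_j,g^j(\lambda))$; for $\lambda\in L^{g^k}$ one has $g(s_k)=s_k$, so $\xi_g(s_k)=1$ by the standard-lift hypothesis, and applying \eqref{cocycle} with $(\alpha,\beta,\gamma)=(\lambda,\,g(s_j),\,g^{j+1}(\lambda))$ --- using $\lambda+g(s_j)=s_{j+1}$ and $g(s_j)+g^{j+1}(\lambda)=g(s_{j+1})$ --- turns each factor into $\epsilon(s_j,g^j(\lambda))\,\epsilon(\lambda,g(s_{j+1}))\big/\bigl(\epsilon(\lambda,g(s_j))\,\epsilon(s_{j+1},g^{j+1}(\lambda))\bigr)$, and the two chains do telescope to $\epsilon(\lambda,s_k)/\epsilon(s_k,\lambda)=(-1)^{\bil{\lambda}{s_k}}$ by \eqref{cohclass}. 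The parity pairing $j\leftrightarrow k-j$ (with the $j=0$ term killed by evenness of $L$), the orbit-length argument showing $g^{k/2}(\lambda)=\lambda$ when $N$ is odd and $k$ is even, and the specializations $k=N$, $k=2N$ for the order statements are all sound. This is in substance the computation of \cite{holomvoas}, so there is nothing to object to.
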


For practical applications of this proposition it is important to have an easy way to determine if $\bil{\lambda}{g^{N/2}(\lambda)}$ is even for all $\lambda\in L$. Consider $g$ of order $2$ (these are the important cases, since $g^{N/2}$ is always of order $2$). One has \be \bil{\lambda}{g(\lambda)}\equiv \frac{1}{2}(\lambda+g(\lambda))^2\equiv 2 \bigl(\frac{1+g}{2}(\lambda)\bigr)^2\mod 2\ .\ee Since $\frac{1+g}{2}$ is the projector onto the $g$-invariant subspace  $L^g\otimes \RR$ of $L\otimes \RR$,  by self-duality of $L$, one has $\frac{1+g}{2}(L)=(L^g)^*$. Therefore, the existence of $\lambda\in L$ with $\bil{\lambda}{g(\lambda)}$ odd is equivalent to the existence of $v\in (L^g)^*$ with half-integral square norm $v^2\in \frac{1}{2}+\ZZ$. This condition is quite easy to check, once the lattice $L^g$ is known. When the fixed sublattice $L^g$ is positive definite, the order of the standard lift can also be related to properties of the lattice theta series  $\theta_{L^g}(\tau)=\sum_{\lambda\in L^g} q^{\lambda^2/2}$. This is well known to be a modular form of weight $r/2$, where $r$ is the rank of $L^g$, for a congruence subgroup of $SL_2(\ZZ)$. Its S-transform $\theta_{L^g}(-1/\tau)$ is proportional to the theta series $\theta_{(L^g)^*}(\tau)$ of the dual lattice $(L^g)^*$. If $(L^g)^*$ contains a vector $v$ with half-integral square norm $v^2\in \frac{1}{2}+\ZZ$, then the $q$-series of $\theta_{(L^g)^*}(\tau)=\sum_{v\in (L^g)^*} q^{\frac{v^2}{2}}$ contains some powers $q^n$ with $n\in \frac{1}{4}\ZZ$. As a consequence, the standard lift of $g$ of order $2$ has order $2$ if and only if the theta series $\theta_{L^g}(\tau)$ is a modular form for a subgroup of level $2$, while it has order $4$ if it is only modular under a subgroup  of $SL_2(\ZZ)$ of level $4$.

\bigskip

When $g$ has even order $N$ and its standard lift $\hat g$ has order $2N$, it is sometimes convenient to choose a non-standard lift $\hat g$ with the same order $N$ as $g$.  The next proposition shows that for $N=2$ such a lift always exists.

\begin{proposition}
	Let $g\in O(L)$ have order $2$. Then,  there is a lift $\hat g$ of $g$ of order $2$.
\end{proposition}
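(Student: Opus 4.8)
The plan is to start from a standard lift and correct it by a character of $L$. By the preceding propositions a standard lift $\hat g$ exists, and any other lift has the form $\hat g'(V_\lambda)=\rho(\lambda)\,\hat g(V_\lambda)$ for a homomorphism $\rho\colon L\to\{\pm1\}$ (multiplying $\xi_g$ by $\rho$ preserves \eqref{xigcond}, since the right-hand side is unchanged). For any two lifts one computes directly that $(\hat g')^2(V_\lambda)=\rho(\lambda)\rho(g(\lambda))\,\hat g^2(V_\lambda)$, while for the standard lift Proposition \ref{ordstand} gives $\hat g^2(V_\lambda)=(-1)^{\bil{\lambda}{g(\lambda)}}V_\lambda$ for all $\lambda\in L=L^{g^2}$. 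Hence producing an order-$2$ lift amounts to finding a homomorphism $\rho\colon L\to\{\pm1\}$ with
\[
\rho(\lambda)\rho(g(\lambda))=(-1)^{\bil{\lambda}{g(\lambda)}}\qquad\text{for all }\lambda\in L.
\]
If $\bil{\lambda}{g(\lambda)}$ is even for every $\lambda$ the standard lift already works, so the content lies in the remaining case.

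First I would pass to $\FF_2$-linear algebra. Since $\rho(2\mu)=1$, the sought $\rho$ factors through $V:=L/2L$; writing $\rho=(-1)^{\phi}$ with $\phi\in\Hom(V,\FF_2)$, the displayed condition becomes $\phi(\lambda)+\phi(g(\lambda))=Q(\lambda)$, where $Q(\lambda):=\bil{\lambda}{g(\lambda)}\bmod 2$. A key preliminary step is to check that $Q$ is $\FF_2$-linear (and well defined on $V$): expanding $Q(\lambda+\mu)-Q(\lambda)-Q(\mu)=\bil{\lambda}{g(\mu)}+\bil{\mu}{g(\lambda)}$ and using $g\in O(L)$ together with $g^2=1$ to rewrite $\bil{\mu}{g(\lambda)}=\bil{g(\mu)}{\lambda}=\bil{\lambda}{g(\mu)}$, the difference equals $2\bil{\lambda}{g(\mu)}\equiv0$, so $Q\in\Hom(V,\FF_2)$.

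Next I would exploit the geometry of $V$. Because $L$ is unimodular the reduction of $\bil{\cdot}{\cdot}$ modulo $2$ is a nondegenerate bilinear form on $V$, and because $L$ is even it is alternating, so $V$ is a symplectic $\FF_2$-space on which $\bar g$ acts as an involution preserving the form; consequently $\bar g$ is self-adjoint ($\bar g^{\ast}=\bar g^{-1}=\bar g$). Using nondegeneracy to identify functionals with vectors, write $Q=\bil{w}{\cdot}$ and $\phi=\bil{v}{\cdot}$; then $\phi\circ\bar g=\bil{\bar g v}{\cdot}$ and the equation to solve collapses to the single linear condition $(1+\bar g)v=w$ in $V$. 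This is solvable precisely when $w\in\operatorname{Im}(1+\bar g)$, and since $(1+\bar g)^{\ast}=1+\bar g$ one has $\operatorname{Im}(1+\bar g)=\bigl(\ker(1+\bar g)\bigr)^{\perp}=(V^{\bar g})^{\perp}$.

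The crux, and the step I expect to be the main obstacle, is verifying the compatibility $w\in(V^{\bar g})^{\perp}$, that is, that $Q$ vanishes on the fixed space. For $u\in V^{\bar g}$, represented by $\tilde u\in L$ with $g(\tilde u)-\tilde u\in 2L$, evenness of $L$ gives $\bil{w}{u}=Q(\tilde u)=\bil{\tilde u}{g(\tilde u)}\equiv\bil{\tilde u}{\tilde u}\equiv 0\bmod 2$. Hence $w\in\operatorname{Im}(1+\bar g)$, the equation $(1+\bar g)v=w$ has a solution, and the associated $\rho$ yields a lift $\hat g'$ with $(\hat g')^2(V_\lambda)=V_\lambda$ for all $\lambda$, i.e.\ of order $2$. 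The remaining delicacies are bookkeeping: ensuring everything descends correctly to $L/2L$, and confirming that the two structural hypotheses on $L$---unimodularity (nondegeneracy) and evenness (alternation together with the vanishing of $Q$ on $V^{\bar g}$)---are invoked at exactly the right places.
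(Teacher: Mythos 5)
Your proof is correct, and it pursues the same overall strategy as the paper's: start from a standard lift and multiply $\xi_g$ by a sign character $\rho=(-1)^{\bil{\tilde w}{\cdot}}$ chosen so that $\rho(\lambda)\rho(g(\lambda))=(-1)^{\bil{\lambda}{g(\lambda)}}$, which by Proposition \ref{ordstand} forces $(\hat g')^2=1$. Where you genuinely diverge is in how the existence of the correcting character is established. The paper observes that $\lambda+g(\lambda)\mapsto(-1)^{(\lambda+g(\lambda))^2/2}$ is a homomorphism on $(1+g)L$, realizes it by some $w\in((1+g)L)^*$, shows $w\in(L^g)^*$ via a case split ($v\in(1+g)L$ versus $2v\in(1+g)L$), and finally lifts $w$ into $L$ by self-duality. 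You instead reduce everything modulo $2L$, regard $V=L/2L$ as a nondegenerate alternating $\FF_2$-space (nondegeneracy from unimodularity, alternation from evenness), and solve the single linear equation $(1+\bar g)v=w$, with solvability guaranteed by $\operatorname{Im}(1+\bar g)=(V^{\bar g})^{\perp}$ together with the one-line computation $\bil{\tilde u}{g(\tilde u)}\equiv\bil{\tilde u}{\tilde u}\equiv 0\pmod 2$ on representatives of $V^{\bar g}$. The two solvability criteria ($w\in(L^g)^*$ versus $Q|_{V^{\bar g}}=0$) encode the same fact, but your packaging replaces the paper's case analysis with a direct appeal to evenness and makes explicit exactly where each hypothesis on $L$ is used; the small price is the preliminary checks that $Q$ descends to $V$ and is $\FF_2$-linear, and that $\bar g$ is self-adjoint (both of which you carry out correctly, the latter being the one place the order-$2$ hypothesis enters your reduction via $g=g^{-1}$).
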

\begin{proof}
	Let $\hat g'$ be a standard lift of $g$. If $\bil{\lambda}{ g(\lambda)}$ is even for all $\lambda\in L$, then by the previous proposition $\hat g'$ has order $2$ and we can just set $\hat g=\hat g'$. Suppose that $\bil{\lambda}{ g(\lambda)}$ is odd for some $\lambda\in L$.
	One has $(-1)^{\bil{\lambda}{ g(\lambda)}}=(-1)^{\frac{(\lambda+g(\lambda))^2}{2}}$, and the map $\lambda+g(\lambda)\mapsto (-1)^{\frac{(\lambda+g(\lambda))^2}{2}}$ is a homomorphism $(1+g)L\to \{\pm 1\}$. Thus, there is $w\in ((1+g)L)^*$ such that $(-1)^{\frac{(\lambda+g(\lambda))^2}{2}}=(-1)^{w\cdot (\lambda+g(\lambda))}$ for all $\lambda\in L$. Notice that $(1+g)L\subseteq L^g$, so that $(L^g)^*\subseteq ((1+g)L)^*$. On the other hand, it is easy to see that $w\in (L^g)^*$, i.e. that $\bil{v}{w}\in \ZZ$ for all $v\in L^g$. Indeed, if $v\in L^g$, then either $v\in (1+g)L$ (in which case, $\bil{v}{w}\in \ZZ$ is obvious) or $2v\in (1+g)L$ (because $2v=v+g(v)$ for $v\in L^g$). In the latter case. one has $(-1)^{\bil{2v}{w}}=(-1)^{\frac{(2v)^2}{2}}=1$, so that $\bil{2v}{w}$ must be even, and therefore $\bil{v}{w}\in \ZZ$. Finally, by self-duality of $L$, for every  $w\in (L^g)^*$  there always exist $\tilde w\in L$ such that $\bil{\tilde w}{v}=\bil{w}{v}$ for all $v\in L^g$. In particular, $(-1)^{\bil{\tilde w}{\lambda+g(\lambda)}}=(-1)^{\bil{\lambda}{g(\lambda)}}$ for all $\lambda\in L$. Then, we can define the lift $\hat g$ by $\xi_g(\lambda)=\xi'_g(\lambda)(-1)^{\bil{\tilde w}{\lambda}}$, where $\xi'_g$ is the function corresponding to a standard lift. Thus, for all $\lambda\in L$,
	\begin{align} \hat g^2(V_\lambda)&=\xi_{g}(\lambda)\xi_g(g(\lambda))V_\lambda=\xi_g(\lambda+g(\lambda))\frac{\epsilon(\lambda,g(\lambda))}{\epsilon(g(\lambda),\lambda)}V_\lambda\\
	&=\xi'_g(\lambda+g(\lambda))(-1)^{\bil{\tilde w}{\lambda+g(\lambda)}}(-1)^{\bil{\lambda}{g(\lambda)}}V_\lambda=V_\lambda\ ,
	\end{align} where we used the condition \eqref{xigcond}, and the fact that $\xi'_g(\lambda+g(\lambda))=1$, since $\lambda+g(\lambda)\in L^g$ and $\hat g'$ is a standard lift. We conclude that $\hat g$ has order $2$.
\end{proof}

\subsection{Applications}

Let us now apply the results described in the previous section to the cases we are interested in, namely the sigma model on $T^4$ and the SVOA based on the $E_8$ lattice. As explained in the article, there is a correspondence between automorphisms $g$ of the lattice $\Gamma^{4,4}$ lifting to symmetries that preserve the $\N=(4,4)$ superconformal algebra, and certain automorphisms of the lattice $E_8$. One needs to choose a lift of these lattice automorphisms to symmetries of the corresponding conformal field theory or SVOA. As explained above, a lift is determined, up to conjugation by CFT symmetries, by the restriction of the function $\xi_g$ to the $g$-fixed sublattice. The most obvious choice is to consider the standard lift both for the sigma model and for the SVOA, so that $\xi_g$ is trivial on the fixed sublattices. In general, the order of the standard lift is either the same or twice the order of the lattice automorphism. Therefore, it is not obvious a priori that the standard lifts in the sigma model and in the SVOA have the same order; we will show now that this is always true in the present the case.

Let $g$  be an automorphism of the lattice $\Gamma^{4,4}$. We denote any such automorphism by the class of $\rho_\psi$, as in Table \ref{t:classes}. Using Propositions \ref{standconj} and \ref{ordstand}, the orders of the standard lifts are as follows.

\begin{itemize}
	\item Classes of odd order $N$ (1A, 3BC, 3E, 3E', 5BC, 5BC'): since $N$ is odd, the standard lift has also order $N$. This conclusion holds also for the lift of the corresponding automorphisms of the $E_8$ lattice.
	\item Class -1A: an automorphism $g$ in this class flips the sign of all vectors in $\Gamma^{4,4}$. Therefore, it acts trivially on $\Gamma^{4,4}/2\Gamma^{4,4}$, so that one can set $\xi_g(\lambda)=1$ for all $\lambda\in \Gamma^{4,4}$, and this lift has obviously order $2$. Since $g$ fixes no sublattice, any other lift of $g$ is conjugate to the lift above and has order $2$. This also implies that any lift $\hat g$ of a lattice automorphism $g$ of even order $N$, and such that $g^{N/2}$ is in class -1A, has order $N$. Indeed, $\hat g^{N/2}$ is a lift of a symmetry in class -1A, so that it must have order $2$. This argument applies to all $g$ in the classes 2.2C, -3BC, -3E, -3E', 8A, -8A, -5BC, -5BC', 12BC, -12BC'.
	An analogous reasoning holds for the automorphism of the lattice $E_8$ corresponding to class -1A, which flips the sign of all vectors in $E_8$. This automorphism has no fixed sublattice and acts trivially on $E_8/2E_8$, so that one can take $\xi_g$ to be trivial. The same reasoning as for the sigma model case shows that all lifts of this symmetry are conjugate to each other and have order $N=2$. More generally,  all automorphisms of $E_8$ in the classes 2.2C, -3BC, -3E, -3E', 8A, -8A, -5BC, -5BC', 12BC, -12BC' lift to symmetries of the SVOA of the same order.
	\item Classes 2A and 2A': the fixed sublattice is isomorphic to the root lattice $D_4$,  and its dual $D_4^*$ is an integral lattice. In particular,  $D_4^*$ contains no vector of half-integral square norm, and therefore the standard lift has order $2$. Furthermore, for any $g$ of even order $N$ such that $g^{N/2}$ is in class 2A or 2A', one has that $\bil{\lambda}{g^{N/2}(\lambda)}$ is even for all $\lambda$, so that a standard lift has the same order $N$. This applies to all $g$ in the classes 4A, 4A', -4A, -4A', 6BC, 6BC'. For automorphisms of the $E_8$ lattice in classes 2A and 2A', the fixed sublattice is also isomorphic to $D_4$, so the standard lift has the same order $N=2$. The same reasoning holds for the standard lifts of automorphisms in the classes 4A, 4A', -4A, -4A', 6BC, 6BC'.
	\item Classes 2E and 2E': the fixed sublattice is $A_1^4$, and its dual $(A_1^4)^*$ contains vectors of square length $1/2$. Thus, the standard lift has order $2N=4$. The corresponding automorphism of the $E_8$ lattice also fixes a sublattice isomorphic to $A_1^4$, so its standard lift has order $4$. 
\end{itemize}

The conclusion of this analysis is that, both for toroidal sigma models and for the $E_8$ SVOA, the only case where the standard lift has twice the order of the corresponding lattice automorphism is for the class 2E.

If $g$ is in class 2E, the twined genus for the standard lift (which has order $4$) involves the theta series of the $A_1^4$ lattice
\be \Theta_{A_1^4}(\tau)=\theta_3(2\tau)^4\ .
\ee This theta series (and the corresponding twined genus) is a modular form of level $4$. This is consistent with the analysis above.

One can also focus on a (non-standard) lift of order $2$, with $\xi_g(\lambda)=(-1)^{\lambda^2}/2$ for all $\lambda\in (1+g)\Gamma^{4,4}$. For any $g$ of order $2$, one has $(1+g)\Gamma^{4,4}=2((\Gamma^{4,4})^g)^*$; in particular, for $g$ in class 2E or 2E', one has $(\Gamma^{4,4})^g\cong A_1^4$, so that $(1+g)\Gamma^{4,4}\cong 2(A_1^4)^*\cong A_1^4\cong \Gamma^g$.  For this lift, the twining genus involves the theta series with characteristics
\be  \Theta_{A_1^4,\xi_g}(\tau)=\sum_{\lambda\in A_1^4} q^{\lambda^2/2} (-1)^{\lambda^2/2}=\Theta_{A_1^4}(\tau+\frac{1}{2})=\theta_3(2\tau+1)^4=\theta_4(2\tau)^4
\ee which is modular (with multipliers) for $\Gamma_0(2)$ (its S-transform is proportional to $\theta_4(\tau/2)^4$). As for the $E_8$ SVOA, since the sublattice fixed by the automorphism is also isomorphic to $A_4$, one can choose an analogous (non-standard) lift with the same $\xi_g$ on the fixed sublattice, which is also of order $2$.

For a general class, it is difficult to define a reasonable correspondence between non standard lifts in the sigma model and the $E_8$ SVOA, since the fixed sublattices are, in general, not isomorphic.

\section{The K3 Case Revisited}
\label{app:k3teg}

In \cite{derived} it was shown that the Conway group action on $\vsn$ may be used to recover many of the weak Jacobi forms that arise as twined elliptic genera of K3 sigma models. It was conjectured in op. cit. that all twined K3 elliptic genera arise in this way, but the analysis of \cite{umbral_k3} subsequently showed that there are four exceptions. In \S\ref{app:k3teg-teg} we explain how all but one of these exceptional cases may be recovered if we allow non-supersymmetry-preserving automorphisms of $\vsn$, and the remaining one too if we allow linear combinations of supersymmetry-preserving automorphisms of $\vsn$ with higher than expected order. 

VOSAs $\vfn$ and $\vsn$ are studied in \cite{Duncan07} and \cite{conway,derived}, respectively, in connection with moonshine for the Conway group. In \S\ref{app:k3teg-cms} we briefly review the relationship between these objects, 
and explain a sense in which the Conway group $\Co_0$ (see (\ref{eqn:Co0AutLL})) arises naturally as a group of automorphisms of the latter. Specifically, we introduce the notion of {Ramond (sector) ${\cal N}=1$ structure}, show that $\vsn$ admits such a structure, and demonstrate that $\Co_0$ is the full group of automorphisms of this structure. 
We also explain why $\vfn$ and $\vsn$ are the same as far as twinings of the K3 elliptic genus are concerned.

\subsection{Twined Elliptic Genera}\label{app:k3teg-teg}

We begin by reviewing the exceptional forms identified in \cite{umbral_k3}. Three of them actually arise in Mathieu moonshine, as the weak Jacobi forms associated to the conjugacy classes $3B$, $4C$ and $6B$ of $M_{24}$. (As before we adopt the notation of \cite{ATLAS} for conjugacy classes.) According to \cite{matmooreview}, for example, these forms are given respectively by
\begin{gather}
	\begin{split}\label{eqn:exceptionalforms1}
	Z_{3|3}(\tau,z)&=2\frac{\eta(\tau)^6}{\eta(3\tau)^2}\phi_{-2,1}(\tau,z),\\
	Z_{4|4}(\tau,z)&=2\frac{\eta(\tau)^4\eta(2\tau)^2}{\eta(4\tau)^2}\phi_{-2,1}(\tau,z),\\
	Z_{6|6}(\tau,z)&=2\frac{\eta(\tau)^2\eta(2\tau)^2\eta(3\tau)^2}{\eta(6\tau)^2}\phi_{-2,1}(\tau,z),
	\end{split}
\end{gather}
where 
$\phi_{-2,1}=-\theta_{1}^2\eta^{-6}$ is the unique weak Jacobi form of weight $-2$ and index $1$  for $\SL_2(\ZZ)$ such that $\phi_{-2,1}(\tau,z)=y^{-1}-2+y+O(q)$ for $q=e^{2\pi i\tau}$ and $y=e^{2\pi i z}$.
The subscripts $n|h$ in (\ref{eqn:exceptionalforms1}) encode the characters (i.e. multiplier systems) of the respective forms. See (1.9) and (3.8) of \cite{matmooreview}, for example, for the details of this. We denote the remaining exceptional form $Z_{8|4}$. It is given explicitly by
\begin{gather}\label{eqn:exceptionalforms2}
	Z_{8|4}(\tau,z)=2\frac{\eta(2\tau)^4\eta(4\tau)^2}{\eta(8\tau)^2}\phi_{-2,1}(\tau,z).
\end{gather}

Next we recall that in \S9 of \cite{derived} a holomorphic function $\phi_g(\tau,z):\HH\times \CC\to \CC$ is associated to each element $g$ of the Conway group 
\begin{gather}\label{eqn:Co0AutLL}
\Co_0:=\Aut(\LL)
\end{gather} 
such that the space of $g$-fixed points in $\LL\otimes_\ZZ \CC\simeq \CC^{24}$ is at least $4$-dimensional. In (\ref{eqn:Co0AutLL}) we write $\LL$ for the Leech lattice (cf. e.g. \cite{packings,ATLAS}). Now the full automorphism group of the VOSA structure on $\vsn$ is a $\ZZ_2$ quotient of the Lie group $\Spin_{24}(\CC)$, and we observe here that the construction of op. cit. works equally well for for any element of $\Spin_{24}(\CC)$ whose image in $\SO_{24}(\CC)$ fixes a $4$-space in $\LL\otimes_\ZZ\CC$. For example, consider an orthogonal transformation $x\in \SO_{24}(\CC)$ with Frame shape $\pi_x= 2^8.4^2$ (so that the characteristic polynomial of $x$ is $(1-x^2)^8(1-x^4)^2$). Then we have $C_{-x}=D_{x}=0$ in the notation of \cite{derived}, and a computation reveals that $\phi_{x}=Z_{4|4}$. Similarly we recover $Z_{6|6}$ and $Z_{8|4}$ by taking $\pi_x$ to be $2^6.6^2$ and $2^4.8^2$, respectively. 

The Frame shapes $2^8.4^2$, $2^6.6^2$, and $2^4.8^2$ are not represented by elements of the Conway group, and the Conway group is distinguished in that it arises as the stabilizer of any $\N=1$ structure on $\vsn$ (cf. \S\ref{app:k3teg-cms}). So such symmetries of $\vsn$ do not preserve supersymmetry, but it is notable that we can recover three of the four exceptional twined K3 elliptic genera by allowing these more general twinings on the VOSA side. 

Another interesting coincidence is that fact that $Z_{4|4}=\phi_g$, for $g\in \Co_0$ with Frame shape $\pi_g=2^4.4^{-4}.8^4$. (For this we take $D_g=16$, in the notation of \cite{derived}.) The surprising part is that $g$ has order $8$, rather than $4$. We have not found away to recover the last remaining form, $Z_{3|3}$, directly from an element of $\Spin_{24}(\CC)$, but we have 
\begin{gather}
	Z_{3|3}=2\phi_g-\frac23\phi_{g^3}-\frac13\phi_e
\end{gather}
for $g\in \Co_0$ with Frame shape $\pi_g=1^3.3^{-2}.9^3$ (take $D_g=9$ for the computation of $\phi_g$ here), which may be regarded as an analogue.

\subsection{Conway Modules}\label{app:k3teg-cms}

Both \cite{Duncan07} and \cite{conway} are concerned with moonshine for the Conway group, but the former focusses on $\vfn$, whereas the latter puts a spotlight on $\vsn$. As explained in \cite{conway}, these two objects are isomorphic as VOSAs, but inequivalent as representations of $\Co_0$. Indeed, the action of $\Co_0$ on $\vsn$ is faithful, whereas the action of $\Co_0$ on $\vfn$ factors through its center to the (sporadic) simple group $\Co_1:=\Co_0/\langle \gz\rangle$ (cf. e.g. \cite{ATLAS}). Here $\gz$ denotes the unique non-trivial central element of $\Co_0$, which is realized by $-I$ as an automorphism of $\LL$ (cf. (\ref{eqn:Co0AutLL})). 

To make our discussion explicit and concrete let $A$ denote the VOSA of $24$ free fermions, and let $A_\tw$ be an irreducible canonically twisted module for $A$. Then ${\cal A}:=A\oplus A_\tw$ admits a structure $({\cal A}, Y,\omega, {\bf v})$ of intertwining operator algebra, and the spin group $\Spin_{24}(\CC)$ is the automorphism group of this structure. Now according to the construction of \cite{Duncan07} there exists a vector $\tau\in A_\tw$ with the property that if $Y(\tau,z)=\sum_{n\in \frac12\ZZ}\tau_nz^{-n-1}$ then the operators $\tau_{n}$ for $n\in \frac12\ZZ$ generate actions of the Neveu--Schwarz and Ramond Lie superalgebras (cf. e.g. \cite{beast}) on ${\cal A}$. Thus it is natural to consider the subgroup of $\Spin_{24}(\CC)=\Aut({\cal A})$ that fixes $\tau$. It follows from the results of \cite{Duncan07} that this fixing group is none other than the Conway group,  $\Co_0$.

Now let $A=A^0\oplus A^1$ and $A_\tw=A_\tw^0\oplus A_\tw^1$ be the eigenspace decompositions for the action of the central element $\gz\in \Co_0$ on $A$ and $A_\tw$, respectively, so that $\gz$ acts as $(- I)^k$ on $A^k\oplus A_\tw^k$ for $k\in \{0,1\}$. Then the intertwining operator algebra (IOA) structure on ${\cal A}$ restricts to VOSA structures on $A^0\oplus A_\tw^0$ and $A^0\oplus A_\tw^1$, and the distinguished vector $\tau$ lies in $A_\tw^0$, and generates a representation of the Neveu--Schwarz Lie superalgebra on $A^0\oplus A_\tw^0$. 

Now as VOSAs with $\Co_0$-module structure we have $\vfn=A^0\oplus A^0_\tw$ and $\vsn=A^0\oplus A^1_\tw$. 
Both VOSAs admit (non-faithful) actions of $\Spin_{24}(\CC)$ by automorphisms, but we can naturally isolate an action of the Conway group in the case of $\vfn$ as follows.
Recall that an {\em ${\cal N}=1$ structure} on a VOSA $V$ is a choice of vector in $V$ for which the modes of the corresponding vertex operator generate a representation of the Neveu--Schwarz superalgebra on $V$. Then, according to the discussion above, $\tau$ defines an ${\cal N}=1$ structure on $\vfn$, and $\Co_1=\Co_0/\langle \gz\rangle$ is the subgroup of $\Aut(\vfn)$ that preserves this structure.

How about for $\vsn$? Well, it is no less natural to consider the subgroup of $\Aut(\vsn)$ that fixes $\tau$, which is precisely $\Co_0$. Since $\tau$ does not belong to $\vsn$ it does not define an ${\cal N}=1$ structure on $\vsn$ in the sense of \cite{Duncan07}, but it does belong to the canonically twisted $\vsn$-module $\vsn_\tw=A^1\oplus A_\tw^0$, and, according to our discussion, the modes of suitable corresponding intertwining operators generate representations of the Neveu--Schwarz and Ramond superalgebras on $\vsn\oplus \vsn_\tw$. With this in mind we make the following definition. For $V$ a VOSA define a {\em Ramond sector ${\cal N}=1$ structure} for $V$ to be a choice of vector $\tau\in V_\tw$, for a canonically twisted $V$-module $V_\tw$, with the property that the modes attached to $\tau$ by some intertwining operator on $V\oplus V_\tw$ generate representations of the Neveu--Schwarz and Ramon superalgebras on $V\oplus V_\tw$. Then we have shown that $\tau$ defines a Ramond sector ${\cal N}=1$ structure for $\vsn$, and $\Co_0$ arises as the automorphism group of this structure.

Finally we comment on the question of what happens when we take $\vfn$ in place of $\vsn$ in the setup of \cite{derived}. The question makes sense because the construction of \S9 of op. cit. applies equally well to $\vfn$ as it does to $\vsn$, but actually there is no difference in the Jacobi forms that one obtains. This is because if $G$ is any subgroup of $\Co_0=\Aut(\LL)$, or the orthogonal group $\SO(\LL\otimes_\ZZ\CC)=\SO_{24}(\CC)$ for that matter, that fixes a vector in $\LL\otimes_\ZZ\RR$, then it fixes an orthonormal vector $v$ in the space $\LL\otimes_\ZZ\CC$, which is naturally identified with $A^1_{\frac12}$. Now the zero mode $v(0)$ of the associated vertex operator $A\otimes A_\tw\to A_\tw((z^{\frac12}))$ defines an isomorphism of $G$-modules $A^0_\tw \to A^1_\tw$, since $G$ fixes $v$ by assumption. So $\vfn_\tw=A^1\oplus A_\tw^1$ and $\vsn_\tw=A^1\oplus A_\tw^0$ are the same as $G$-modules, and so the twinings of the K3 elliptic genus that we can recover from $\vfn_\tw$ and $\vsn_\tw$ coincide.

\end{document}